\newtheorem{theorem}{Theorem}
\newtheorem{lemma}[theorem]{Lemma}
\newtheorem{corollary}[theorem]{Corollary}
\newtheorem{proposition}[theorem]{Proposition}
\setlist{itemsep=1pt,parsep=0pt,topsep=2pt}
\renewcommand{\geq}{\geqslant}
\renewcommand{\leq}{\leqslant}
\renewcommand{\ge}{\geq}
\renewcommand{\le}{\leq}
\newcommand{\autA}{\mathcal{A}}
\newcommand{\bigO}[1]{\ensuremath{\mathcal{O}({#1})}}
\newcommand{\PSPACE}{\textsf{PSPACE}}
\newcommand{\union}{\mathbin{\cup}}
\newcommand{\intersect}{\mathbin{\cap}}
\newcommand{\abs}[1]{\ensuremath\left|#1\right|}
\newcommand{\floor}[1]{\ensuremath\left\lfloor#1\right\rfloor}
\newcommand{\ceil}[1]{\ensuremath\left\lceil#1\right\rceil}
\newcommand{\set}[2]{\ensuremath{\left\{#1 \mid #2\right\}}}
\newcommand{\os}[1]{\ensuremath{\left\{#1\right\}}}
\newcommand{\gR}{\ensuremath{\mathcal{R}}}
\newcommand{\gL}{\ensuremath{\mathcal{L}}}
\newcommand{\gJ}{\ensuremath{\mathcal{J}}}
\newcommand{\gK}{\ensuremath{\mathcal{K}}}
\newcommand{\Req}{\mathrel{\gR}}
\newcommand{\Keq}{\mathrel{\gK}}
\newcommand{\Rle}{\mathrel{\le_\gR}}
\newcommand{\Lle}{\mathrel{\le_\gL}}
\newcommand{\Jle}{\mathrel{\le_\gJ}}
\newcommand{\Rl}{\mathrel{<_\gR}}
\newcommand{\Ll}{\mathrel{<_\gL}}
\newcommand{\Jl}{\mathrel{<_\gJ}}
\newcommand{\Rg}{\mathrel{>_\gR}}
\newcommand{\icz}{\iota_{\mathsf{=0}}}
\newcommand{\ico}{\iota_{\mathsf{=1}}}
\newcommand{\idec}{\iota_{\mathsf{dec}}}
\newcommand{\iinc}{\iota_{\mathsf{inc}}}
\newcommand{\irotl}{\iota_{\mathsf{rotl}}}
\newcommand{\irotr}{\iota_{\mathsf{rotr}}}
\newcommand{\imvl}{\iota_{\mathsf{mvl}}}
\newcommand{\imvr}{\iota_{\mathsf{mvr}}}
\newcommand{\isync}{\iota_{\mathsf{sync}}}
\newcommand{\idisp}{\iota_{\mathsf{off}}}
\newcommand{\ival}[1]{\iota_{\mathsf{val={#1}}}}
\newcommand{\Lco}{L_{\mathsf{=1}}}
\newcommand{\Lrotl}{L_{\mathsf{rotl}}}
\newcommand{\Lrotr}{L_{\mathsf{rotr}}}
\newcommand{\Linc}{L_{\mathsf{inc}}}
\newcommand{\Ldec}{L_{\mathsf{dec}}}
\newcommand{\Lreset}{L_{\mathsf{reset}}}
\newcommand{\ov}[1]{\overline{#1}}
\newcommand{\ie}{i.e.,~}
\newcommand{\eg}{e.g.~}
\let\oldpar\paragraph
\renewcommand{\paragraph}[1]{\oldpar*{\bf #1}}
\DeclareOldFontCommand{\it}{\normalfont\itshape}{\mathit}
\title{Green's Relations in \\ Finite Transformation Semigroups}
\author{Lukas Fleischer \and Manfred Kuf\-leitner}
\date{FMI, University of Stuttgart\thanks{This work was supported by the DFG grants DI 435/5-2 and \mbox{KU 2716/1-1}.}\\
  Universitätsstraße 38, 70569 Stuttgart, Germany\\
  \normalsize\texttt{\{fleischer,kufleitner\}@fmi.uni-stuttgart.de}}
\begin{document}

\maketitle

\begin{abstract}
  \noindent
  {\sffamily\normalsize\bfseries{Abstract.}} \ 
  We consider the complexity of Green's relations when the semigroup is given
  by transformations on a finite set.
  Green's relations can be defined by reachability in the
  (right/left/two-sided) Cayley graph. The equivalence classes then correspond
  to the strongly connected components. It is not difficult to show that, in
  the worst case, the number of equivalence classes is in the same order of
  magnitude as the number of elements. Another important parameter is the
  maximal length of a chain of components. Our main contribution is an
  exponential lower bound for this parameter. There is a simple construction
  for an arbitrary set of generators. However, the proof for constant alphabet
  is rather involved. Our results also apply to automata and their syntactic
  semigroups.
 \end{abstract}

\section{Introduction}

Let $Q$ be a finite set with $n$ elements. There are $n^n$ mappings from $Q$ to
$Q$. Such mappings are called \emph{transformations} and the elements of $Q$
are called \emph{states}. The composition of transformations defines an
associative operation. If $\Sigma$ is some arbitrary subset of transformations,
we can consider the \emph{transformation semigroup} $S$ generated by $\Sigma$;
this is the closure of $\Sigma$ under composition.\footnote{When introducing
transformation semigroups in terms of actions, then this is the framework of
\emph{faithful} actions.} The set of all transformations on $Q$ is called the
\emph{full transformation semigroup} on $Q$. One can view $(Q,\Sigma)$ as a
description of $S$. Since every element $s$ of a semigroup~$S$ defines a
transformation $x \mapsto x \cdot s$ on $S^1 = S \cup \os{1}$, every
semigroup~$S$ admits such a description $(S^1,S)$; here, $1$ either denotes the
neutral element of $S$ or, if $S$ does not have a neutral element, we add $1$
as a new neutral element. Essentially, the description $(S^1,S)$ is nothing but
the multiplication table for~$S$. On the other hand, there are cases where a
description as a transformation semigroup is much more succinct than the
multiplication table. For instance, the full transformation semigroup on $Q$
can be generated by a set $\Sigma$ with three
elements~\cite{HolzerKoenig2004tcs}. In addition to the size of $S$, it would
be interesting to know which other properties could be derived from the number
$n$ of states.

Green's relations are an important tool for analyzing the structure of a
semigroup $S$. They are defined as follows:
\begin{equation*}
  s \Rle t \text{\,~if~\,} s S^1 \subseteq t S^1, \ \ 
  s \Lle t \text{\,~if~\,} S^1 s \subseteq S^1 t, \ \ 
  s \Jle t \text{\,~if~\,} S^1 s S^1 \subseteq S^1 t S^1.
\end{equation*}
We write $s \Req t$ if both $s \Rle t$ and $s \Rle t$; and we set $s \Rl t$ if
$s \Rle t$ but not $s \Req t$. The relations $\gL$, $\Ll$, $\gJ$ and $\Jl$ are
defined analogously. The relations $\gR$, $\gL$, and $\gJ$ form equivalence
relations. The equivalence classes corresponding to these relations are called
\emph{$\gR$-classes} (resp.\ \emph{$\gL$-classes}, \emph{$\gJ$-classes}) of
$S$. Instead of ideals, one could alternatively also use reachability in the
right (resp.\ left, two-sided) Cayley graph of $S$ for defining $\Rle$ (resp.\
$\Lle$, $\Jle$). We note that $s \Rl t$ implies $s \Jl t$ and, symmetrically,
$s \Ll t$ implies $s \Jl t$.
The complexity of deciding Green's relations for transformation semigroups was
recently shown to be \PSPACE-complete~\cite{BrandlSimon15dlt}.
When considering a transformation semigroup on $n$ states, one of our first
results shows that the maximal number of $\gJ$-classes is in $n^{\Theta(n)}$.
In particular, the number of equivalence classes is in the same order of
magnitude as the size of the transformation semigroup. Since every $\gJ$-class
contains at least one $\gR$- and one $\gL$-class, the same bound holds for
$\gR$ and~$\gL$.

Another important parameter is the maximal length $\ell$ such that there are
elements $s_1,\ldots,s_\ell$ with $s_1 \Rg \cdots \Rg s_\ell$, called the
\emph{$\gR$-height}. Similarly, we are interested in the $\gL$- and
$\gJ$-height. Many semigroup constructions such as the Rhodes expansion and
variants thereof rely on this parameter; see
e.g.~\cite{CartonMichel03tcs,eil76:short,GanardiEtAl2016fsttcs}. We show that
the maximal $\gR$-height is in $2^{\Theta(n)}$; for the maximal $\gL$-height
and $\gJ$-height we only have $2^{\Omega(n)}$ as a lower bound. Proving the
lower bounds for a fixed number of generators is much more involved than for
arbitrarily many generators.
The exponential lower bounds are quite unexpected in the following sense: If
the transformation semigroup is small, then the number of equivalence classes
(and hence, the lengths of chains) cannot be big. On the other hand, the
transformation semigroup is maximal if it is full. And an equivalence class in
the full transformation semigroup only depends on the number of states in the
image; this is because we can apply arbitrary permutations. In particular, the
number of equivalence classes in these two extreme cases is small.

There is a tight connection between deterministic automata and transformation
semigroups. Roughly speaking, a transformation semigroup is an automaton
without initial and finial states. The main difference is that for automata,
one usually is interested in the syntactic semigroup rather than the
transformation semigroup; the syntactic semigroup is the transformation
semigroup of the minimal automaton.
We show that the above bounds on the number of equivalence classes and heights
also apply to syntactic semigroups. 

\begin{theorem}\label{thm:numclasses}
  For each $n \in \mathbb{N}$, there exists a minimal automaton $\autA_n$ with
  $n$ states over an alphabet of size $5$ such that the number of $\gJ$-classes
  (resp.\ $\gR$-classes, $\gL$-classes) of the transformation semigroup
  $T(\autA_n)$ is at least $(n-4)^{n-4}$.
\end{theorem}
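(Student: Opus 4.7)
The plan is to exhibit, for each $n$, a minimal automaton $\autA_n$ over a five-letter alphabet together with a family $\set{t_f}{f \in W^W}$ of $k^k = (n-4)^{n-4}$ elements of $T(\autA_n)$ indexed by self-maps $f$ of a ``work area'' $W$ of size $k = n-4$, and to verify that these $t_f$ are pairwise $\gJ$-inequivalent. Since every $\gJ$-class contains at least one $\gR$-class and at least one $\gL$-class, the same lower bound will then follow for the number of $\gR$- and $\gL$-classes, which handles the parenthetical clauses of the statement.

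For the construction I would split the $n$ states as $Q = W \union G$ with $\abs{W} = k$ and $\abs{G} = 4$. Three of the five letters would act on $W$ as a generating set of the full transformation monoid $W^W$; such a set exists by the Holzer--K\"onig theorem and can be chosen as a $k$-cycle, a transposition, and a rank-$(k{-}1)$ idempotent. The remaining two letters would install a rigid ``gadget'' on $G$: concretely, $G$ would be arranged as an irreversible chain terminating at a sink, and both gadget letters would move $G$ strictly towards the sink whenever fired. The family $\os{t_f}$ is then produced by first using the gadget letters to bring $G$ into a designated ``armed'' phase and then applying a word in the work-area letters realising $f$ on $W$; by construction this yields one element $t_f$ per self-map $f \in W^W$. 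Minimality of $\autA_n$ would be a routine separability check: distinct states in $W$ are separated by the three work-area letters, while distinct states in $G$ are separated by probing the chain with a gadget letter.

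The hard part, as expected, is proving that $t_f \Jeq t_g$ implies $f = g$. For this I would attach to each element $s \in T(\autA_n)$ a $\gJ$-invariant $\varphi(s)$, that is, a datum definable from the principal two-sided ideal $S^1 s S^1$, that recovers $f$ from $t_f$. A natural candidate is the pair consisting of the gadget phase of $s$ together with the restriction of $s$ to a distinguished set of probe-states in $W$; by design of the gadget, multiplying $t_f$ on either side by any element either strictly decreases the gadget phase (hence prevents equality) or leaves it invariant, and in the latter case the multiplier is a word in the work-area letters only, which can only compose with $f$ and therefore cannot collapse two distinct $f$'s into a single $\gJ$-class. The delicate point is the interaction between the work-area generators and the gadget: keeping the alphabet at five letters forces the three work-area letters to be almost-neutral on $G$, yet every generator must do \emph{something} on $G$ for $\autA_n$ to remain minimal. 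Resolving this tension --- for instance by forcing the work-area letters to make a detour through $G$ that is neutralised by a final gadget letter in the word realising $t_f$ --- is where most of the technical effort will be concentrated, and it is precisely the obstruction that keeps the proof more involved than the analogous construction with an unbounded alphabet.
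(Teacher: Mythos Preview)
Your high-level plan --- three letters generating $W^W$, a gadget on four extra states, and a phase argument for $\gJ$-inequivalence --- is along the right lines, but the $\gJ$-inequivalence step as you describe it has a genuine gap. You propose the dichotomy that a multiplier either strictly decreases the gadget phase or is a work-area word, ``which can only compose with $f$ and therefore cannot collapse two distinct $f$'s into a single $\gJ$-class.'' The second branch does not give what you want: if the work-area letters act as the identity on $G$ (as your ``almost-neutral'' suggests), then $t_f$ restricted to $W$ is essentially $f$, and $x\, t_g\, y = t_f$ with $x,y$ work-area words reduces on $W$ to an equation $x' g\, y' = f$ in $W^W$. That equation has plenty of solutions with $f \ne g$ --- indeed any two $\gJ$-equivalent elements of $W^W$ would then become $\gJ$-equivalent in your $T(\autA_n)$, and $W^W$ has only $k$ $\gJ$-classes, not $k^k$. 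To rescue the argument you need the multipliers $x, y$ to be \emph{empty}, not merely work-area words; and for that the work-area letters must themselves knock the gadget off its target phase, contrary to your set-up.

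The paper achieves this cleanly, and in the process dissolves your ``delicate point'' about minimality, by separating the two concerns. First (Proposition~\ref{prop:k-classes-lower}) one adds three states $q_1,q_2,q_3$ and a single new letter $c$ that acts as the identity on $W$; one sets $q_1 \xrightarrow{c} q_2 \xrightarrow{c} q_3$, lets the work-area letters fix $q_2$, and sends $q_1$ and $q_3$ into $W$ under \emph{every} remaining transition. Then $q_1 \cdot cuc = q_3$ for every work-area word $u$, whereas any non-empty prefix or suffix forces the image of $q_1$ into $W$; hence $cuc$ and $cvc$ are $\gJ$-incomparable whenever $u \ne v$ as transformations of $W$. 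Second (Proposition~\ref{prop:minimal}) one adds one further state fixed by all existing letters together with one further letter cycling through all $n$ states; this makes the automaton minimal while leaving the intermediate semigroup \emph{completely isolated}, so by Proposition~\ref{prop:isolated} its Green's relations agree with those of $T(\autA_n)$. In particular the work-area letters are literally the identity on the minimality state, so the tension you anticipate between a five-letter alphabet and minimality never arises.
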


\begin{theorem}\label{thm:main}
  There exists a sequence of minimal automata $(\autA_n)_{n \in \mathbb{N}}$
  over a fixed alphabet such that $\autA_n$ has $n$ states and the $\gR$-height
  (resp.\ $\gL$-height, $\gJ$-height) of the transformation semigroup
  $T(\autA_n)$ is in $\Omega(2^n / n^{9.5})$.
\end{theorem}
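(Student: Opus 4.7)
The plan is to construct, for each sufficiently large $n$, a minimal automaton $\autA_n$ on $n$ states over a fixed alphabet whose transformation semigroup admits a strict $\gR$-chain of length $\Omega(2^n/n^{9.5})$. The strategy is to let $\autA_n$ simulate a space-bounded counter-like device: most of the $n$ states encode cells of a tape of length $n - O(\log n)$, while the remaining states implement a head marker, a small amount of control, and a sink that plays the role of the bottom of the $\gR$-order. Guided by the primitives suggested by the preamble macros ($\icz$, $\ico$, $\imvl$, $\imvr$, $\iinc$, $\idec$, $\isync$), the alphabet will consist of a constant number of letters for moving the head, writing a bit, and synchronising state, together sufficient to simulate one counter increment in polynomially many steps.

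For each counter value $v$ in a range of size $\ell \in \Omega(2^n/n^{9.5})$, I would pick a canonical word $w_v$ whose image in the semigroup is a transformation $t_v$ representing ``the counter holds value $v$''. The easy direction is that $t_{v+1}$ is reachable from $t_v$ by a fixed ``increment'' suffix, so $t_{v+1} \Rle t_v$. The delicate direction is strictness: no word over the fixed alphabet recovers $t_v$ from $t_{v+1}$, yielding $t_v \Rg t_{v+1}$. To prove this I would attach to each transformation $t$ in the semigroup a combinatorial invariant extracted from its image and kernel -- essentially the ``configuration'' of the simulated machine -- and show that every letter either preserves this invariant or strictly decreases it in a fixed well-founded order (or collapses the transformation to the sink). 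Minimality of $\autA_n$ is ensured by choosing the initial state and accepting set so that all $n$ states are reachable and pairwise distinguishable, a standard tweak that does not interfere with the main argument.

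The main obstacle is the fixed-alphabet requirement. With an unbounded alphabet one could attach a dedicated generator to each counter transition and the $\gR$-chain would be immediate; with a constant alphabet, every counter step must be decomposed into elementary letters, and one has to rule out spurious ``shortcuts'' that might collapse distinct chain levels into the same $\gR$-class. The polynomial denominator $n^{9.5}$ reflects this encoding overhead: cells consumed by head and control information, together with the polynomial blow-up from simulating each counter step. Finally, the $\gJ$-height bound follows without extra work from the implication $s \Rl t \Rightarrow s \Jl t$ noted in the excerpt, while the $\gL$-height bound is obtained by applying the same template to a dualised construction in which the roles of left and right multiplication are exchanged.
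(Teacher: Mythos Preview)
Your plan has the right architecture --- simulate a long deterministic computation on $\approx n$ states, extract an $\gR$-chain from the successive transformations, then minimalise --- and this is what the paper does. But the strictness argument you sketch does not work as stated, and the gap sits exactly where you flag ``the main obstacle''. You propose an invariant that each letter ``either preserves or strictly decreases''. If the increment letters \emph{preserve} it, then $t_v$ and $t_{v+1}$ carry the same value and the invariant separates nothing. If they \emph{decrease} it, then so do all the intermediate letters needed to realise one increment over the fixed alphabet, and you must exhibit an invariant taking $2^{\Omega(n)}$ distinct values along which every single letter moves strictly monotonically --- which is the original problem restated. The paper sidesteps this by abandoning monotonicity: the configurations $R_0,\dots,R_\ell$ are pairwise distinct subsets of the state set, all of the \emph{same} cardinality, and the construction is engineered so that at each $R_i$ the unique instruction that does not shrink the cardinality is the intended next step (a ``maximal progressing computation'', Proposition~\ref{prop:tokens}). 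Strictness of the $\gR$-chain then follows from a short extremal argument, not from a global monotone invariant.

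Two further points where the paper differs from your outline. The long computation is not a binary counter over the whole tape but an enumeration of all $\binom{n}{n/2}$ size-$n/2$ subsets of an $n$-cell tape; three auxiliary $\lceil\log n\rceil$-bit counters $P,Q,Z$ (this is what the $\iinc$, $\idec$, $\isync$ macros are for), together with a constant-size control unit enforcing a fixed regular language on instruction sequences, implement this enumeration over a $32$-letter alphabet, and the $9\lceil\log n\rceil$ cells they occupy account for the factor $n^9$ in the denominator. And the $\gL$-bound is not obtained by a separate dual construction: all instructions are chosen to be injective partial maps, so the opposite semigroup is again a partial transformation semigroup on the same state set (Proposition~\ref{prop:opposite}), and its $\gR$-chain is the desired $\gL$-chain.
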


\section{Preliminaries}

A \emph{semigroup} is a set $S$ equipped with an associative operation
${\cdot} \colon S \times S \to S$. A \emph{subsemigroup} of $S$ is a subset $T$
such that $s_1 s_2 \in T$ for all $s_1, s_2 \in T$. It is called
\emph{completely isolated} if the converse implication holds, \ie{}$s_1 s_2 \in
T$ implies $s_1 \in T$ and $s_2 \in T$ for all $s_1, s_2 \in S$. 
The \emph{opposite} semigroup of $S$ is obtained by replacing the operation
with its left-right dual ${\circ} \colon S \times S \to S$ defined by $x \circ
y = y \cdot x$.

In general, Green's relations in a subsemigroup $T$ of $S$ do not coincide with
the corresponding relations in $S$. However, if $T$ is a completely isolated
subsemigroup, the following property holds:
\begin{proposition}\label{prop:isolated}
  Let $S$ be a semigroup and let $T$ be a completely isolated subsemigroup of
  $S$. Let $\gK$ be one of the relations $\le_\gR$, $\le_\gL$, $\le_\gJ$,
  $\gR$, $\gL$ or $\gJ$. Then, for all $x, y \in T$, we have $x \Keq y$ in $S$
  if and only if $x \Keq y$ in $T$.
\end{proposition}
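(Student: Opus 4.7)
The plan is to treat the three preorders $\le_\gR$, $\le_\gL$, $\le_\gJ$ separately and then deduce the corresponding equivalences, since $x \Req y$ iff $x \Rle y$ and $y \Rle x$, and likewise for $\gL$ and $\gJ$. Moreover, by passing to the opposite semigroup, the $\gL$-case reduces to the $\gR$-case, so effectively there are two cases to handle.

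For the $\gR$-preorder, the direction ``$T$-relation implies $S$-relation'' is immediate: any factorization $x = y t$ with $t \in T^1$ is also a factorization with $t \in S^1$, because $T^1 \subseteq S^1$ (when $T$ lacks a neutral, we just identify both adjoined $1$'s; when it has one, that element already lies in $S$). Conversely, suppose $x \Rle y$ in $S$ with $x,y \in T$. Then either $x = y$ (and we are done) or $x = y s$ for some $s \in S$. Since $y \in T$ and $ys = x \in T$, complete isolation forces $s \in T$, witnessing $x \Rle y$ in $T$.

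The same reasoning, applied on the other side, handles $\le_\gL$. For $\le_\gJ$, the forward implication is again trivial; for the converse, assume $x \in S^1 y S^1$ with $x, y \in T$. The nontrivial case is $x = s_1 y s_2$ with $s_1, s_2 \in S$. Here the step I expect to be the main (though mild) obstacle is the double application of complete isolation: from $(s_1 y) s_2 = x \in T$ we first obtain $s_1 y \in T$ and $s_2 \in T$; then from $s_1 \cdot y \in T$ with $y \in T$ we obtain $s_1 \in T$. This places both $s_1$ and $s_2$ in $T^1$, so $x \in T^1 y T^1$. The remaining sub-cases $x = s_1 y$ and $x = y s_2$ collapse to the $\gR$- and $\gL$-arguments above.

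Having established the three $\le$-equivalences between $S$ and $T$, intersecting with the reversed relation yields the statements for $\gR$, $\gL$, and $\gJ$, completing the proof.
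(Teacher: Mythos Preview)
Your proof is correct and follows essentially the same approach as the paper: handle the preorders by using complete isolation to pull the witnessing factors from $S$ into $T$, then pass to the equivalences. The paper only writes out the $\le_\gR$ case and leaves the rest implicit, whereas you spell out the $\le_\gJ$ case (with the double application of complete isolation) and the small subtlety about identifying the adjoined identities; these elaborations are fine and fill in exactly what the paper omits.
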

\begin{proof}
  We will only prove the statement for the preorder $\Rle$.
  For the implication from right to left, we have $xS^1 \subseteq xT^1 S^1
  \subseteq yT^1 S^1 \subseteq y S^1$.
  For the converse implication, suppose that $xS^1 \subseteq yS^1$, \ie{}there
  exists some $z \in S^1$ such that $yz = x$. Since $T$ is completely isolated,
  we have $z \in T^1$, which yields $zT^1 \subseteq T^1$ and thus, $xT^1 =
  yzT^1 \subseteq yT^1$.
\end{proof}
An \emph{$\gR$-chain} is a sequence $(s_1, \dots,
s_\ell)$ of elements of $S$ such that $s_{i+1} \Rl s_i$ for all $i \in \os{1,
\dots, \ell-1}$; $\ell$ is called the \emph{length} of the $\gR$-chain. The
maximal length of an $\gR$-chain of $S$ is called the \emph{$\gR$-height} of $S$.
The notions $\gL$-chains, $\gJ$-chains, $\gL$-height and $\gJ$-height are
defined analogously.

A \emph{partial transformation} on a set $Q$ is a partial function $f \colon Q
\to Q$. If the domain of $f$ is all of $Q$, \ie{}if $f$ is a total function,
$f$ is called a \emph{transformation}.
A partial transformation $f \colon Q \to Q$ is called \emph{injective} if $f(p)
\ne f(q)$ whenever $p\neq q$ and both $f(p)$ and $f(q)$ are defined. The
elements of $Q$ are often called \emph{states}.
In the following, we use the notation $q \cdot f$ instead of $f(q)$ to denote
the image of an element $q \in Q$ under $f$. For $R \subseteq Q$ let $R \cdot f
= \set{q \cdot f}{q \in R}$. Note that for all subsets $R \subseteq Q$ and all
partial transformations $f \colon Q \to Q$, the inequality $\abs{R \cdot f} \le
\abs{R}$ holds; we will implicitly use this property throughout the paper.
The \emph{composition} $fg$ of two transformations $f \colon Q \to Q$ and $g
\colon Q \to Q$ is defined by $q \cdot fg = (q \cdot
f) \cdot g$. The composition is associative.

The set of all partial transformations (resp.\ transformations) on a fixed set
$Q$ forms a semigroup with composition as the binary operation. It is called
the \emph{full partial transformation semigroup} (resp.\ \emph{full
transformation semigroup}) on~$Q$. Subsemigroups of full (partial)
transformation semigroups are called \emph{(partial) transformation semigroups}
and are often specified in terms of generators. Partial transformation
semigroups and transformation semigroups are strongly related.
On one side, every transformation semigroup also is a partial transformation
semigroup. In the other direction a slightly weaker statement holds:
\begin{proposition}\label{prop:partial}
  Let $P$ be a partial transformation semigroup on $n$ states. Then there
  exists a transformation semigroup on $n+1$ states which is isomorphic to~$P$.
\end{proposition}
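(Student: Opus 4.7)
The plan is to add a fresh ``sink'' state and route every undefined transition to it. Formally, let $Q$ be the state set of $P$ with $\abs{Q}=n$, pick an element $\bot \notin Q$, and set $Q' = Q \union \os{\bot}$. For each $f \in P$ define a total transformation $\hat f \colon Q' \to Q'$ by $q \cdot \hat f = q \cdot f$ whenever $q \in Q$ and $q \cdot f$ is defined, and $q \cdot \hat f = \bot$ otherwise (in particular $\bot \cdot \hat f = \bot$). I would then let $\hat P = \set{\hat f}{f \in P}$ and argue that the map $f \mapsto \hat f$ is a semigroup isomorphism from $P$ onto $\hat P$, which also entails that $\hat P$ is closed under composition and thus a transformation semigroup on $n+1$ states.

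The key step is checking that $\widehat{fg} = \hat f \hat g$ for all $f,g \in P$. I would verify this pointwise, distinguishing three cases for $q \in Q$: (i) $q \cdot f$ is undefined, in which case both $q \cdot \widehat{fg}$ and $q \cdot \hat f \hat g$ equal $\bot$, using $\bot \cdot \hat g = \bot$; (ii) $q \cdot f$ is defined but $(q \cdot f) \cdot g$ is not, in which case both sides again evaluate to $\bot$; (iii) both $q \cdot f$ and $(q \cdot f) \cdot g$ are defined, in which case both sides equal $(q \cdot f) \cdot g$. The case $q = \bot$ is immediate. Injectivity of $f \mapsto \hat f$ is then straightforward: if $\hat f = \hat g$, then for every $q \in Q$ the values $q \cdot f$ and $q \cdot g$ are either both undefined or both equal to the same element of $Q$, so $f = g$.

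There is no real obstacle here; the only subtle point is to confirm that composition in $P$ interacts correctly with the absorbing sink, i.e.\ that ``undefined followed by anything'' matches ``sink followed by anything'' under the hat operation. The three-case analysis above takes care of this. Once the homomorphism and injectivity are established, $\hat P$ is a subsemigroup of the full transformation semigroup on $Q'$ isomorphic to $P$, which proves the proposition.
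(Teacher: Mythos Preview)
Your proof is correct and follows the same sink-state construction as the paper: add a fresh absorbing state and send every undefined transition to it. The only cosmetic difference is that the paper extends just a generating set $\Sigma$ and asserts that the generated transformation semigroup is isomorphic to $P$, whereas you extend every element of $P$ at once and verify the homomorphism identity $\widehat{fg}=\hat f\hat g$ explicitly; the underlying idea is identical.
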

\begin{proof}
  Let $P$ be a partial transformation semigroup on a finite set $Q$, generated
  by a set $\Sigma$. Let $p$ be a new element not in $Q$. We extend the
  elements of $\Sigma$ to transformations on $Q \union \os{p}$ by letting $q
  \cdot a = p$ whenever $q \cdot a$ is undefined in $P$. In particular, we let
  $p \cdot a = p$ for all $a \in \Sigma$. By construction, the transformation
  semigroup generated by this extended set of generators is isomorphic to $P$.
\end{proof}
A partial transformation semigroup is called \emph{injective} if it is
generated by a set of injective partial transformations. An important property
of injective partial transformation semigroups is that they have a left-right
dual:
\begin{proposition}\label{prop:opposite}
  The opposite semigroup of an injective partial transformation semigroup is a
  partial transformation semigroup.
\end{proposition}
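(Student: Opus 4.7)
The plan is to realize the opposite semigroup concretely by taking partial inverses. Let $P$ be an injective partial transformation semigroup on $Q$, generated by a set $\Sigma$ of injective partial transformations. The first step is to observe that the composition of two injective partial transformations is again injective, so every element $f \in P$ is an injective partial transformation and therefore admits a partial inverse $f^{-1}$ on $Q$ (with domain equal to the image of $f$, defined by $q \cdot f^{-1} = p$ iff $p \cdot f = q$).

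Next, I would verify the familiar identity $(fg)^{-1} = g^{-1} f^{-1}$ for injective partial transformations $f, g$: both sides have the same domain (namely, the set of $q \in Q$ such that $q \in \mathrm{img}(g)$ and $q \cdot g^{-1} \in \mathrm{img}(f)$), and on that common domain they return the same element by the definition of the partial inverse. Using this, I would define $\phi \colon P \to P'$ by $\phi(f) = f^{-1}$, where $P'$ denotes the partial transformation semigroup generated by $\Sigma^{-1} = \set{a^{-1}}{a \in \Sigma}$. The identity above makes $\phi$ an anti-homomorphism, hence a homomorphism from the opposite semigroup $P^{\mathrm{op}}$ to $P'$; and for any generator product $f = a_1 \cdots a_n \in P$ we have $\phi(f) = a_n^{-1} \cdots a_1^{-1} \in P'$, so $\phi$ lands in $P'$ and is surjective onto it.

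Finally, $\phi$ is injective because inversion is an involution on injective partial transformations: if $f^{-1} = g^{-1}$, then $(f^{-1})^{-1} = (g^{-1})^{-1}$, which recovers $f = g$. Thus $\phi$ is an isomorphism $P^{\mathrm{op}} \to P'$, showing that $P^{\mathrm{op}}$ is isomorphic to the partial transformation semigroup $P'$ on the same state set $Q$.

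I do not expect a serious obstacle here; the main point to get right is bookkeeping between domains and images when verifying $(fg)^{-1} = g^{-1} f^{-1}$, since partial transformations force one to check that both sides are undefined on exactly the same arguments. Everything else is a direct consequence of the fact that partial inverses of injective maps compose in the expected anti-homomorphic way.
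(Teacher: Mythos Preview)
Your proposal is correct and follows essentially the same approach as the paper: both take partial inverses of the injective generators, verify the anti-homomorphism identity $(fg)^{-1} = g^{-1}f^{-1}$, and conclude that the semigroup generated by the inverted generators is isomorphic to the opposite semigroup. The paper writes $\overline{f}$ for your $f^{-1}$ and phrases the bijection as ``$fg = h$ iff $\overline{g}\,\overline{f} = \overline{h}$,'' but the content is identical.
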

\begin{proof}
  Let $P$ be a partial transformation semigroup on a set of states $Q$,
  generated by a set of injective transformations $\Sigma$. Since the
  composition of injective partial transformations is injective, every element
  of $P$ is injective.
  For each $f \in P$ we let $\ov f \colon Q \to Q$ be the partial
  transformation that is undefined on $Q \setminus (Q \cdot f)$ and defined by
  $(q \cdot f) \cdot \ov f = q$ for all $q \in Q$. It is well-defined because
  $f$ is injective. Moreover, it is easy to check that $\ov{fg} = \ov g \ov f$
  for all $f, g \in P$.

  Let $\ov P$ be the partial transformation semigroup generated by $\ov \Sigma
  = \set{\ov a}{a \in \Sigma}$. Then, for all $f, g, h \in P$, we have $fg = h$
  in $P$ if and only if $\ov g \ov f = \ov h$ in $\ov P$, which shows that $\ov
  P$ is isomorphic to the opposite semigroup of $P$.
\end{proof}
Transformation semigroups naturally arise when considering deterministic finite
automata. Let $\autA = (Q, \Sigma, \delta, q_0, F)$ be a deterministic finite
automaton. Then, each letter $a \in \Sigma$ can be interpreted as a
transformation $a \colon Q \to Q$ where $q \cdot a = \delta(q, a)$. The
transformation semigroup on $Q$ generated by all letters in $\Sigma$ is denoted
by $T(\autA)$ and it is called the \emph{transition semigroup} of $\autA$.
Conversely, given a transformation semigroup $T$ on a finite set $Q$ and a
finite set of generators $\Sigma$, for each $q_0 \in Q$ and $F \subseteq Q$,
one can define a deterministic finite automaton $\autA = (Q, \Sigma, \delta,
q_0, F)$ where $\delta \colon Q \times \Sigma \to Q$ is defined as $\delta(q,
a) = q \cdot a$.

A well-known approach for translating bounds on the size of a transformation
semigroup to syntactic monoids is to \emph{make} an automaton minimal. This can
be done by introducing a new generator $c$ with $q_i \cdot c = q_{i+1}$ for $Q
= \os{q_1,\ldots,q_n}$ and $q_{n+1} = q_1$; moreover, one chooses some
arbitrary state to be both initial and final. We adapt this construction to
also work with Green's relations.

\begin{proposition}\label{prop:minimal}
  Let $T$ be a transformation semigroup on $n$ states, generated by a finite
  set $\Sigma$. Then there exists a minimal $(n+1)$-state deterministic finite
  automaton $\autA$ over an alphabet of size $\abs{\Sigma}+1$ such that $T$ is
  a completely isolated subsemigroup of $T(\autA)$.
\end{proposition}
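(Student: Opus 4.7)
The plan is to add one fresh state $p$ and one fresh generator $c$ to $T$, arranged so that $p$ plays the role of a gateway that can be left but never re-entered. Fix an enumeration $Q = \os{q_1,\ldots,q_n}$. Every $a \in \Sigma$ is extended to $Q' = Q \union \os{p}$ by setting $p \cdot a = p$ and keeping its action on $Q$ unchanged. The new letter $c$ is defined by $p \cdot c = q_1$ together with the cycle $q_1 \mapsto q_2 \mapsto \cdots \mapsto q_n \mapsto q_1$ on $Q$; thus $c$ maps $Q'$ onto $Q$, so $p$ lies outside the image of every letter.

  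Let $\widehat T \subseteq T(\autA)$ be the image of $T$ under the natural extension $a \mapsto \hat a$. This map is a semigroup homomorphism, since the extended actions agree with $T$ on $Q$, and it is injective because the action of $T$ on $Q$ is faithful. The structural observation that drives everything is that both $\Sigma$-letters and $c$ send $Q$ into $Q$; hence for every $s \in T(\autA)$ and every $q \in Q$ one has $q \cdot s \in Q$. Consequently a word $w$ over $\Sigma \union \os{c}$ satisfies $p \cdot w = p$ only when $w$ contains no occurrence of $c$, which identifies $\widehat T$ with $\set{s \in T(\autA)}{p \cdot s = p}$.

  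Complete isolation is then automatic: if $s_1 s_2 \in \widehat T$, set $q = p \cdot s_1$; should $q \in Q$ hold, the observation above would give $q \cdot s_2 \in Q$, contradicting $q \cdot s_2 = p$. Hence $q = p$, so $s_1$ fixes $p$, and then $s_2$ fixes $p$ as well, placing both factors in $\widehat T$.

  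For minimality I declare $p$ to be the initial state and $F = \os{p, q_1}$ the set of final states. Reachability follows by iterating $c$ from $p$, which visits $p, q_1, q_2, \ldots, q_n$ in turn. Distinguishability splits into a short case analysis: the empty word separates $\os{p, q_1}$ from $\os{q_2, \ldots, q_n}$; the letter $c$ separates $p$ from $q_1$ since $p \cdot c = q_1 \in F$ but $q_1 \cdot c = q_2 \notin F$; and for $q_i, q_j$ with $2 \le i \ne j \le n$, the power $c^{n+1-i}$ rotates $q_i$ onto $q_1 \in F$ while taking $q_j$ to some $q_\ell$ with $\ell \ne 1$, hence outside $F$. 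The main obstacle to navigate is the tension between complete isolation---which asks $p$ to be a trap one cannot return to---and minimality, which requires enough symmetry on $Q$ to separate its states; letting $c$ act as a full cycle on $Q$ while feeding $p$ inwards simultaneously meets both demands.
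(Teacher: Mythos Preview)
Your construction is the paper's: adjoin a fresh state $p$ fixed by every $a\in\Sigma$, a fresh letter $c$ that sends $p$ into $Q$ and cycles $Q$, then read off complete isolation from $\widehat T=\{s\in T(\autA):p\cdot s=p\}$ and minimality from powers of $c$. One small gap: with $F=\{p,q_1\}$ your separation of $p$ from $q_1$ relies on $q_1\cdot c=q_2\notin F$, which needs $n\ge 2$; for $n=1$ both states lie in $F$ and every word keeps them there, so the automaton is not minimal. The paper avoids this by taking $F=\{q_n\}$ with the new state left out of $F$: then the empty word already distinguishes $p$ from $q_n$---the unique pair that $c$ merges---and $c^{n-i}$ handles all remaining pairs uniformly, including the case $n=1$.
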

\begin{proof}
  Let $T$ be a transformation semigroup on a set of states $Q =
  \os{q_1,\ldots,q_n}$, generated by $\Sigma$. Let $\autA = (Q \union \os{q_0},
  \Sigma \union \os{c}, \delta, q_0, \os{q_n})$ be the automaton defined by
  $\delta(q_0,a) = q_0$ and $\delta(q_i, a) = q_i \cdot a$ for $i \geq 1$ and
  all $a \in \Sigma$. The transitions for the letter $c$ are defined by
  $\delta(q_i,c) = q_{i+1}$ for $i<n$ and $\delta(q_n,c) = q_1$.
  This automaton is minimal: for two different states $q_i, q_j \in Q \cup
  \os{q_0}$ with $i>j$, we have $\delta(q_i, c^{n-i}) = q_n$ but $\delta(q_j,
  c^{n-i}) \ne q_n$. 

  By construction, $T$ is a subsemigroup of $T(\autA)$.
  To see that $T$ is completely isolated within $T(\autA)$, note that we have
  $\delta(q_0, u) = q_0$ if and only if $u \in \Sigma^*$.
\end{proof}

\section{Bounds for the Number of Classes}

Let $\gK$ be any of the relations $\gR$, $\gL$ or $\gJ$.
The na\"{\i}ve upper bound for the number of $\gK$-classes of a transformation
semigroup $T$ on $n$ states is given by the size of $T$ itself. Since there
are $n^n$ different functions from $Q$ to $Q$, the semigroup $T$ contains at
most $n^n$ elements. It is well known that this bound is tight even for a
constant number of generators, since for each $n \ge 1$ there exists a
transformation semigroup of size $n^n$ generated by a set $\Sigma$ with three
elements; see \eg\cite{HolzerKoenig2004tcs}.

As each $\gR$-class (resp. $\gL$-class, $\gJ$-class) consists of at least one
element, the number of such classes is also bounded by $n^n$. We now show that
this upper bound is tight up to a constant factor.
\begin{proposition}\label{prop:k-classes-lower}
  Let $T$ be a transformation semigroup on $n$ states, generated by a finite
  set $\Sigma$. Then there exists a transformation semigroup on $n+3$ states
  which is generated by $\abs{\Sigma} + 1$ elements and has at least $\abs{T}$
  different $\gJ$-classes.
\end{proposition}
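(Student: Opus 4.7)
The strategy is to augment $T$ by three new states $p_1, p_2, p_3$ and one additional generator $c$, yielding a transformation semigroup $T'$ on $n+3$ states with $\abs{\Sigma}+1$ generators. Each $a \in \Sigma$ is extended to fix the three new states pointwise, while $c$ is chosen so that its interaction with $T$ produces, for each $s \in T$, a canonical element $\hat s \in T'$ with the property that distinct $s, t \in T$ yield $\hat s, \hat t$ in distinct $\gJ$-classes of $T'$. The intuition is that $c$ together with the new states acts as a ``tag'' that cannot be lost by further multiplications inside $T'$.

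A promising role for $c$ on $\{p_1, p_2, p_3\}$ is a strictly descending chain such as $p_1 \mapsto p_2 \mapsto p_3 \mapsto p_3$, which is $\gJ$-trivial and irreversible on its own. The real design question is how $c$ should act on $Q$. A careless choice (\eg $c$ acting as the identity on $Q$, or as a constant map onto one state) either lets $c$ commute with $\Sigma$ so that $\gJ$-identifications from $T$ simply persist in $T'$, or collapses $c s$ onto a common function for many~$s$. A successful definition must interleave $c$ and $\Sigma$ so that some canonical word $\hat s$ built from $c$ and a representative of~$s$ records the transformation $s$ faithfully within its action on $Q \union \{p_1, p_2, p_3\}$.

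Once such a $c$ is in place, the verification that $\hat s \Jeq \hat t$ forces $s = t$ proceeds by analyzing a putative equation $\hat s = u \hat t v$ with $u, v \in T'^{\,1}$. Restricting to $\{p_1, p_2, p_3\}$, the descending-chain structure of $c$ pins down the form of $u$ and $v$ up to powers of $c$ and factors acting as the identity on the new states. Projecting the resulting rigid equation back to $Q$ then reduces it to a constraint that is only satisfiable when $s = t$. The main obstacle is precisely choosing $c$'s action on $Q$ so that this rigidity argument carries through for every pair of distinct elements of $T$; once accomplished, the injection $s \mapsto [\hat s]_{\gJ}$ immediately yields the claimed lower bound of $\abs{T}$ on the number of $\gJ$-classes of~$T'$.
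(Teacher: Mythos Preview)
Your outline has the right overall shape, but it stops exactly at the point where the argument has content: you never specify how $c$ acts on~$Q$, and you explicitly flag this as ``the main obstacle.'' A plan that identifies the obstacle without removing it is not yet a proof.

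More seriously, the particular design decisions you \emph{do} commit to push you in an unpromising direction. By stipulating that every $a\in\Sigma$ fixes $p_1,p_2,p_3$ pointwise, the action of any word on the new states depends only on the number of occurrences of~$c$; and with the chain $p_1\mapsto p_2\mapsto p_3\mapsto p_3$, once a word contains at least two $c$'s the new states are all sent to $p_3$. Consequently, for any candidate $\hat s$ containing $\ge 2$ copies of~$c$, an equation $\hat s = x\,\hat t\,y$ imposes \emph{no} constraint on $x,y$ coming from the new states. All the work then falls on the $Q$-component, where you are trying to separate $\gJ$-classes using a single extra transformation~$c$ on~$Q$---precisely the situation your own paragraph warns against.

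The paper's construction flips your two design choices. It lets $c$ act as the \emph{identity} on~$Q$ (the option you dismiss) and instead lets each $a\in\Sigma$ map the new states into $Q$: concretely, $q_1\cdot a=q_3\cdot a=q_0$, $q_2\cdot a=q_2$, and $c$ sends $q_1\mapsto q_2\mapsto q_3\mapsto q_0$. This creates a one-way door from $\{q_1,q_2,q_3\}$ into $Q$: once a trajectory enters $Q$ it never returns. One then checks that $q_1\cdot w=q_3$ holds if and only if $w\in c\Sigma^*c$; in particular $q_1\cdot cuc=q_3$, whereas for any non-trivial padding $x,y$ the state $q_1$ is pushed into $Q$ and $q_1\cdot x\,cvc\,y\in Q\ne\{q_3\}$. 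Since $c$ is the identity on $Q$, distinct $u,v\in T$ give distinct $cuc,cvc$, and the previous sentence shows $cuc\not\Jle cvc$. The missing idea, then, is not a clever action of $c$ on~$Q$, but letting the \emph{old} generators leak the new states back into~$Q$.
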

\begin{proof}
  Let $T$ be a semigroup of transformations on a set of states $Q$, generated
  by a finite set $\Sigma$, and let $q_0$ be an arbitrary element from $Q$. Let
  $q_1$, $q_2$, $q_3$ be new states not in $Q$ and let $c$ be a new generator
  not in $\Sigma$. Let $U$ be the transformation semigroup on $Q \union
  \os{q_1, q_2, q_3}$ obtained by extending the transformations of $T$ as
  follows: for each $a \in \Sigma$ and $q \in Q$, let $q \cdot c = q$, $q_1
  \cdot a = q_3 \cdot a = q_3 \cdot c = q_0$, $q_1 \cdot c = q_2 \cdot a =
  q_2$, and $q_2 \cdot c = q_3$.

  Let $u, v \in \Sigma^*$ be different elements of $T$. Then $cuc$ and $cvc$
  are different in~$U$. We claim that $cuc \not\Jle cvc$ in $U$.
  For the sake of contradiction, suppose that there exist $x, y \in (\Sigma
  \union \os{c})^*$ such that $cuc = xcvcy$ in $U$. Clearly, $q_1 \cdot cuc =
  q_3 \not\in Q$.  Moreover, at least one of the words $x$ or $y$ must be
  non-empty and therefore $q_1 \cdot xcucy \in Q$. This shows that $cuc \ne
  xcvcy$, as desired.
\end{proof}
Combining the result with statements from the previous section, we obtain a
lower bound for the number of $\gJ$-classes of the transition semigroup of an
automaton.
\begin{proof}[Proof of Theorem~\ref{thm:numclasses}]
  As we mentioned before, it is well known that there exists a $3$-generator
  transformation semigroup on $n$ states of size $n^n$.
  If we first apply Proposition~\ref{prop:k-classes-lower} and then
  Proposition~\ref{prop:minimal} to $T$, we obtain the claim by
  Proposition~\ref{prop:isolated}.
  The statement extends to $\gR$-classes (resp. $\gL$-classes) because each
  $\gJ$-class contains at least one $\gR$-class (resp.\ $\gL$-class).
\end{proof}

\section{Bounds for the Length of Chains}

Let $\gK$ be any of the relations $\gR$, $\gL$ or $\gJ$.
As with the number of $\gK$-classes, the na\"{\i}ve upper bound for the length
of $\gK$-chains is given by the maximal size $n^n$ of the transformation
semigroup on $n$ states. In this section, we improve this upper bound for
$\gR$-chains and later give a lower bound that matches up to a polynomial gap.

\begin{lemma}\label{lem:inverse}
  Let $P$ be a partial transformation semigroup on a finite set $Q$ of
  cardinality $n$. Let $x, y \in P$ such that $Q \cdot x = Q \cdot xy$. Then
  $x \Req xy$.
\end{lemma}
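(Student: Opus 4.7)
The plan is to exploit the fact that the hypothesis $Q \cdot x = Q \cdot xy = (Q \cdot x) \cdot y$ forces $y$ to act as a permutation on the image of $x$. Setting $R = Q \cdot x$, the identity $R \cdot y = R$ together with the general inequality $\abs{R \cdot y} \le \abs{R}$ forces $y$ to be defined everywhere on $R$ and to restrict to a bijection $R \to R$. Since $R$ is finite, this permutation has some finite order $k \ge 1$, so $r \cdot y^k = r$ for every $r \in R$.

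Applying this identity pointwise to the image of $x$ yields $q \cdot xy^k = (q \cdot x) \cdot y^k = q \cdot x$ for every $q \in Q$, that is, $x = xy^k$ in $P$. The relation $xy \Rle x$ is immediate from $xy \in xP^1$. For the reverse direction, if $k = 1$ then $x = xy$ and there is nothing to show; if $k \ge 2$ then $y^{k-1} \in P$ and the equation $x = (xy) \cdot y^{k-1}$ shows that $x \in xy \cdot P^1$, hence $x \Rle xy$. Combining both containments gives $x \Req xy$.

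I do not anticipate a real obstacle: the argument rests on the single observation that equality of images, combined with the finiteness of $Q$, upgrades the restriction $y|_R$ from a surjection to a bijection of a finite set, after which some power of $y$ acts as the identity on $R$. The only minor bookkeeping issue is the case split on whether $k$ equals $1$, which arises because $P$ need not contain a neutral element; passing to $P^1$ makes this handling transparent.
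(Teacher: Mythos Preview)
Your proof is correct and follows essentially the same idea as the paper: both arguments observe that the hypothesis forces $y$ to restrict to a permutation of $Q \cdot x$, and then use that a suitable power of $y$ (the exact order $k$ in your case, the uniform exponent $n!$ in the paper) acts as the identity on $Q \cdot x$, giving $x = (xy)y^{k-1}$. The only cosmetic difference is that the paper avoids your case split on $k=1$ by taking $\omega = n!$, so that $y^{\omega-1} \in P$ automatically whenever $n \ge 2$.
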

\begin{proof}
  Let $\omega = n!$ and let $z = y^{\omega-1}$. It suffices to show that $xyz =
  x$ in $P$, \ie for all $q \in Q$, we have $q \cdot x = q \cdot xyz$.
  By assumption, the restriction of $y$ to the set $Q \cdot x$ is 
  bijective. Thus, the mapping $y^\omega$ acts as identity on $Q \cdot
  x$. %Note that $\omega$ is a multiple of $\abs{Q \cdot x}!$.
  This yields $q \cdot xyz = q \cdot
  xy^\omega = (q \cdot x) \cdot y^\omega = q \cdot x$.
\end{proof}

\begin{proposition}\label{prop:rheight-upper}
  Let $P$ be a partial transformation semigroup on $n$ states. Then the
  $\gR$-height of $P$ is at most $2^n$.
\end{proposition}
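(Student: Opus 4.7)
The plan is to attach to each element $s$ in the chain the subset $Q \cdot s \subseteq Q$ (its image) and to show that these images are pairwise distinct along any strict $\gR$-chain. Since $Q$ has only $2^n$ subsets, this immediately bounds the length $\ell$ of the chain by $2^n$.

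Concretely, I would fix an $\gR$-chain $s_1 \Rg s_2 \Rg \cdots \Rg s_\ell$ in $P$. For each $i < \ell$, the relation $s_{i+1} \Rle s_i$ supplies some $y_i \in P^1$ with $s_{i+1} = s_i y_i$. If $y_i = 1$, then $s_{i+1} = s_i$ and in particular $s_i \Req s_{i+1}$, contradicting the strictness of the chain; hence $y_i \in P$. It follows that for any indices $i < j$, the product $w_{ij} = y_i y_{i+1} \cdots y_{j-1}$ lies in $P$ and satisfies $s_j = s_i w_{ij}$ by telescoping.

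Now suppose, for contradiction, that $Q \cdot s_i = Q \cdot s_j$ for some $i < j$. Then $Q \cdot s_i = Q \cdot s_i w_{ij}$, so Lemma~\ref{lem:inverse} (applied to $x = s_i$ and $y = w_{ij}$) yields $s_i \Req s_i w_{ij} = s_j$, contradicting $s_j \Rl s_i$. Therefore the images $Q \cdot s_1, \ldots, Q \cdot s_\ell$ are pairwise distinct subsets of $Q$, and $\ell \leq 2^n$.

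There is essentially no obstacle here once Lemma~\ref{lem:inverse} is in hand; the only thing worth highlighting is the telescoping step, which upgrades the per-edge inequality $Q \cdot s_i \neq Q \cdot s_{i+1}$ (the immediate consequence of applying the lemma between consecutive chain members) into global pairwise distinctness of all images along the chain. Without this global observation one would only obtain a bound linear in the size of $P$, rather than the exponential bound $2^n$.
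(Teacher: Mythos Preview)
Your proof is correct and follows essentially the same approach as the paper: both show that the images $Q \cdot s_i$ are pairwise distinct by invoking Lemma~\ref{lem:inverse} to derive $s_i \Req s_j$ from $Q \cdot s_i = Q \cdot s_j$, yielding a contradiction. The only cosmetic difference is that the paper obtains the witness $v \in P$ with $s_i v = s_j$ directly from $s_j \Rl s_i$ (which holds by transitivity of $\Rl$), whereas you construct it explicitly as the telescoping product $w_{ij} = y_i \cdots y_{j-1}$; both are valid and amount to the same argument.
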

\begin{proof}
  Let $P$ be a partial transformation semigroup on a set of states $Q$ with
  $\abs{Q} = n$.
  Let $(u_1, u_2, \dots, u_\ell)$ be an $\gR$-chain of $P$. We show that all
  sets $Q \cdot u_i$ must be pairwise distinct which yields the desired bound.
  Suppose that $Q \cdot u_i = Q \cdot u_j$ for $1 \le i < j \le \ell$. Since
  $u_j \Rl u_i$, there exists $v \in P$ with $u_i v = u_j$.
  Lemma~\ref{lem:inverse} yields $u_j \Req u_i$ which implies $u_{i+1}
  \Req u_i$, a contradiction.
\end{proof}

\subsection{Token Computations in Transformation Semigroups}

In this subsection, we introduce the building blocks for the lower bound on the
height.
A \emph{token machine} is a pair $(C, I)$ where $C$ is a finite set and $I$ is
a set of partial transformations on $C$.
The elements of the set $C$ are called \emph{cells}, subsets of $C$ are called
\emph{configurations} and the generators $I$ are called \emph{instructions}.

A \emph{program} is a finite word over the alphabet $I$ and a
\emph{computation} is a sequence
\begin{equation*}
  R_0 \xrightarrow{\iota_1} R_1 \xrightarrow{\iota_2} R_2 \cdots \xrightarrow{\iota_\ell} R_\ell
\end{equation*}
where all $R_i \subseteq C$ have the same cardinality and $R_{i-1} \cdot
\iota_i = R_i$.
The configuration $R_0$ is called \emph{initial configuration} and $R_\ell$ is
called the \emph{final configuration} of the computation.
The program $\iota_1 \iota_2 \cdots \iota_\ell$ is the \emph{label} of the
computation and $\ell$ is its \emph{length}.
It is \emph{progressing} if all configurations appearing in the computation are
pairwise distinct and for each $i \in \os{1, \dots, \ell}$ and each $\iota \in
I \setminus \os{\iota_i}$, we have $\abs{R_{i-1} \cdot \iota} < \abs{R_i}$. It
is \emph{maximal} if $\abs{R_\ell \cdot \iota} < \abs{R_\ell}$ for all $\iota
\in I$.

A language over programs $L \subseteq I^*$ is called \emph{deterministic} on a
configuration $R \subseteq C$ if $\abs{R \cdot u_1} = \abs{R} = \abs{R \cdot
u_2}$ implies $u_1 = u_2$ for all $u_1, u_2 \in L$.

The focal idea of token machines is captured in the following proposition which
states that computations in token machines naturally yield lower bounds for the
length of $\gR$-chains.

\begin{proposition}\label{prop:tokens}
  Let $(C, I)$ be a token machine and let $P$ be the partial transformation
  semigroup on $C$ generated by $I$.
  If there exists a maximal progressing computation of length $\ell$, then the
  $\gR$-height of $P$ is at least $\ell$.
\end{proposition}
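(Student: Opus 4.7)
The plan is to extract an explicit $\gR$-chain of length $\ell$ directly from the given computation. Write the maximal progressing computation as $R_0 \xrightarrow{\iota_1} R_1 \xrightarrow{\iota_2} \cdots \xrightarrow{\iota_\ell} R_\ell$ and define $s_i := \iota_1 \iota_2 \cdots \iota_i \in P$ for $i \in \os{1, \dots, \ell}$. I claim that $s_1 \Rg s_2 \Rg \cdots \Rg s_\ell$. By induction on $i$ we immediately obtain $R_0 \cdot s_i = R_i$, so in particular $\abs{R_0 \cdot s_i} = \abs{R_0}$. The relation $s_{i+1} \Rle s_i$ is free, since $s_{i+1} = s_i\,\iota_{i+1}$ lies in $s_i P$.

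The real content is to rule out $s_i \Rle s_{i+1}$ for each $i \in \os{1, \dots, \ell - 1}$. Suppose for contradiction that $s_i = s_{i+1} v$ for some $v \in P^1$. If $v = 1$ then $s_i = s_{i+1}$ would force $R_i = R_0 \cdot s_i = R_0 \cdot s_{i+1} = R_{i+1}$, contradicting the pairwise distinctness of the configurations in a progressing computation. Otherwise $v = \iota'_1 \iota'_2 \cdots \iota'_k$ with $k \ge 1$, and evaluating $s_{i+1} v = s_i$ starting from $R_0$ yields $R_{i+1} \cdot v = R_i$. Since $\abs{R_i} = \abs{R_{i+1}}$, each of the intermediate sets $R_{i+1}, R_{i+1} \cdot \iota'_1, R_{i+1} \cdot \iota'_1 \iota'_2, \dots, R_{i+1} \cdot v$ must have cardinality $\abs{R_{i+1}}$, so every letter $\iota'_j$ has to preserve the cardinality of the current configuration.

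This is where the progressing and maximal hypotheses do the work, and is the one place the argument needs care. The progressing condition says that at every configuration $R_j$ with $j < \ell$ the instruction $\iota_{j+1}$ is the \emph{unique} element of $I$ that does not shrink $R_j$; an easy induction on $j$ then forces $\iota'_j = \iota_{i+1+j}$, so that applying $\iota'_1 \cdots \iota'_j$ to $R_{i+1}$ produces exactly $R_{i+1+j}$. The maximality condition forbids any cardinality-preserving step from $R_\ell$, so necessarily $i + 1 + k \le \ell$, giving $R_{i+1} \cdot v = R_{i+1+k}$. But $i + 1 + k > i$ and the configurations are pairwise distinct, so $R_{i+1+k} \ne R_i$, contradicting $R_{i+1} \cdot v = R_i$. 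Hence the chain $s_1 \Rg \cdots \Rg s_\ell$ is strict, which gives the claimed lower bound on the $\gR$-height of $P$.
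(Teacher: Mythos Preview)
Your proof is correct and follows essentially the same approach as the paper: define $u_i = \iota_1 \cdots \iota_i$, note $u_{i+1} \Rle u_i$ trivially, and rule out $u_i \Rle u_{i+1}$ by using progressing/maximality to show that any putative $v$ with $u_{i+1} v = u_i$ must start with $\iota_{i+2}$. Where you track every letter of $v$ explicitly to land at $R_{i+1+k} \ne R_i$, the paper instead picks $i$ maximal with $u_i \Rle u_{i+1}$ and observes that $v = \iota_{i+2} w$ gives $u_{i+1} = u_{i+2}(w\,\iota_{i+1}) \Rle u_{i+2}$, contradicting maximality; the two mechanisms are equivalent.
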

\begin{proof}
  Let $R_0 \xrightarrow{\iota_1} R_1 \xrightarrow{\iota_2} R_2 \cdots
  \xrightarrow{\iota_\ell} R_\ell$ be a maximal progressing computation.
  For each $i \in \os{1, \dots, \ell}$, we let $u_i = \iota_1 \iota_2 \cdots
  \iota_i$.
  It remains to show that $(u_1, \dots, u_\ell)$ is an $\gR$-chain.
  By definition, we immediately obtain $u_{i+1} \Rle u_i$.
  Assume, for the sake of contradiction, that $u_i \Rle u_{i+1}$ for some $i
  \in \os{1, \dots, \ell-1}$, \ie there exists $v \in I^*$ with $u_i = u_{i+1}
  v$.
  Without loss of generality, we may assume that $i$ is maximal with this
  property. If $\abs{v} = 0$, then $R_i = R_0 \cdot u_i = R_0 \cdot u_{i+1} =
  R_{i+1}$, contradicting the premise of progression.
  Thus, $\abs{v} \ge 1$ and since the computation is progressing and maximal,
  we have $i < \ell-1$ and $v = \iota_{i+2} w$ for some $w \in I^*$. This
  yields $u_{i+2} w \iota_{i+1} = u_{i+1} \iota_{i+2} w \iota_{i+1} = u_{i+1} v
  \iota_{i+1} = u_i \iota_{i+1} = u_{i+1}$, contradicting the maximality of
  $i$.
\end{proof}

\subsection{Lower Bounds over a Growing Instruction Set}

Before describing the technical ingredients required in our main result, we
prove a slightly weaker statement. In contrast to the result presented later,
it relies on an alphabet that grows exponentially with the number of elements.

\begin{theorem}\label{thm:lower}
  For all even $n \in \mathbb{N}$, there exists a token machine with $n$ cells
  which admits a maximal progressing computation of length at least
  $\binom{n}{n/2} - 1$.
\end{theorem}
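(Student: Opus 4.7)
The plan is to take $C = \os{1, \ldots, n}$ and focus on configurations of cardinality $k = n/2$, of which there are exactly $\binom{n}{n/2}$. I will construct a token machine whose computation visits all such configurations in succession. Since the alphabet is allowed to grow with $n$ here, I can afford one dedicated instruction per transition, which is the key freedom that makes this bound easy compared to the main theorem.

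Concretely, I would fix an arbitrary enumeration $R_0, R_1, \ldots, R_\ell$ of all size-$k$ subsets of $C$, where $\ell = \binom{n}{n/2} - 1$. For each index $i \in \os{1, \ldots, \ell}$, I introduce an instruction $\iota_i$ whose domain is \emph{exactly} $R_{i-1}$ and which sends $R_{i-1}$ bijectively onto $R_i$ (any bijection will do). Then $R_{i-1} \cdot \iota_i = R_i$, so we obtain a computation of length $\ell$ whose configurations are pairwise distinct by choice of enumeration, and all of cardinality $k$.

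The decisive verification is that this computation is progressing and maximal. For any $i$ and any $j \neq i$, the instruction $\iota_j$ is undefined outside $R_{j-1}$, so $R_{i-1} \cdot \iota_j = (R_{i-1} \intersect R_{j-1}) \cdot \iota_j$ has cardinality at most $\abs{R_{i-1} \intersect R_{j-1}}$. Since $R_{i-1}$ and $R_{j-1}$ are distinct subsets of $C$ of the same size $k$, this intersection has size strictly less than $k = \abs{R_i}$, giving progression. The same argument applied at $R_\ell$ yields maximality, because $R_\ell \neq R_{j-1}$ for every $j \in \os{1, \ldots, \ell}$, so $\abs{R_\ell \cdot \iota_j} < k$ for every instruction. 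An appeal to Proposition~\ref{prop:tokens} then finishes the proof.

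There is no genuine obstacle here beyond recognizing that one is free to tailor an individual instruction to each desired transition; the combinatorial content is simply that there are $\binom{n}{n/2}$ balanced subsets to traverse in an arbitrary order, and that any two such subsets intersect in fewer than $k$ elements. The real difficulty is deferred to Theorem~\ref{thm:main}, where a lower bound of comparable order must be achieved using only a constant-size instruction alphabet, forcing one to replace bespoke per-step instructions by a cleverly designed universal set that still distinguishes the current configuration uniquely.
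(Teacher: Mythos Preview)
Your construction and verification are correct and match the paper's proof essentially line for line: enumerate the $\binom{n}{n/2}$ balanced subsets, define $\iota_i$ as a bijection $R_{i-1}\to R_i$ undefined elsewhere, and observe that any other $\iota_j$ applied to $R_{i-1}$ (or to $R_\ell$) loses a token because $R_{j-1}\neq R_{i-1}$. The only remark is that the appeal to Proposition~\ref{prop:tokens} is superfluous here, since the theorem asks only for a maximal progressing computation, not an $\gR$-chain; you can simply stop after establishing maximality.
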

\begin{proof}
  Let $C = \os{1, 2, \dots, n}$. Let $\ell = \binom{n}{n/2} - 1$ and let
  $\os{R_0, R_1, \dots, R_\ell}$ be the set of $n/2$-element subsets of $C$.
  For each $i \in \os{1, \dots, \ell}$, let $\iota_i \colon R_{i-1} \to R_i$ be
  a bijection. Note that in the context of the present proof, it does not
  matter which of the $(n/2)!$ bijections is chosen; for example, one can
  always choose the unique bijection $\iota_i$ such that $\iota_i(j) <
  \iota_i(k)$ if and only if $j < k$. Each $\iota_i$ can be viewed as a partial
  transformation on $C$ which is undefined for all $c \in C \setminus R_{i-1}$.
  We now show that in the token machine $(C, I)$ with $I = \set{\iota_i}{1 \le
  i \le \ell}$, the sequence
  \begin{equation*}
    R_0 \xrightarrow{\iota_1} R_1 \xrightarrow{\iota_2} R_2 \cdots \xrightarrow{\iota_\ell} R_{\ell}
  \end{equation*}
  is a maximal progressing computation. It is a valid computation by the
  definition of the instructions $\iota_i$. Consider $i \in \os{0, \dots,
  \ell}$ and $j \in \os{1, 2, \dots, \ell} \setminus\os{i+1}$. Since $R_{j-1}
  \ne R_i$, the instruction $\iota_j$ is undefined on at least one element of
  $R_i$ and thus, $\abs{R_i \cdot \iota_j} < \abs{R_i}$. This shows that the
  computation is both progressing and maximal.
\end{proof}
The theorem has a series of interesting consequences which will be outlined in
Section~\ref{sec:main}, after proving an improved variant of the theorem with
fixed alphabet.

\subsection{Tapes and Binary Counters}

A \emph{sub-machine} of a token machine $(C, I)$ is a subset $S \subseteq C$
such that for each configuration $R$ and for each instruction $\iota \in I$ with
$\abs{R \cdot \iota} = \abs{R}$, we also have $\abs{(R \intersect S) \cdot
\iota} = \abs{R \intersect S}$. In other words, each computation stays a
computation when restricted to $S$.
The \emph{union} of two token machines $(C, I)$ and $(C', I')$ with $C \cap C'
= \emptyset$ is the token machine $(C \union C', I \union I')$ where the
instructions in $I \setminus I'$ are extended to act as identity on $C'$ and
the instructions in $I' \setminus I$ are extended to act as identity on $C$.
The cells $C$ and $C'$ of the original machines are sub-machines of the union.

An \emph{$n$-bit tape} $T$ is a token machine $(C, I)$ with $n$ cells and an
arbitrary (but fixed) order $(c_0, c_1, \dots, c_{n-1})$.
One can interpret configurations $R \subseteq C$ as bit strings $b_{n-1}
b_{n-2} \cdots b_0$ where $b_i = 1$ if and only if $c_i \in R$ and $b_i = 0$
otherwise, and think of $T$ as a ring buffer with a read/write head at position
$0$.
An instruction $\irotl^T$ can be used to move the tape head to the right (or,
actually, retain the head position but left-rotate the buffer). For each $i \in
\os{0, \dots, n-2}$, we let $c_i \cdot \irotl^T = c_{i+1}$ and $c_{n-1} \cdot
\irotl^T = c_0$. The instruction $\irotr^T$ is defined analogously and moves in
the opposite direction.
An instruction $\icz^T$ can be used to check whether the head is scanning a
zero and halt the program otherwise. It is undefined on $c_0$ and defined as
the identity on $\os{c_1, \dots, c_{n-1}}$. Conversely, the $\isync^T$
instruction is defined as the identity on $c_0$ and undefined on every other
cell.
An instruction $\imvl^T$ maps $c_0$ to $c_1$, acts as the identity on $\os{c_2,
c_3, \dots, c_{n-1}}$ and is undefined on $c_1$.  Analogously, $\imvr^T$ maps
$c_0$ to $c_{n-1}$, acts as the identity on $\os{c_1, c_2, \dots, c_{n-2}}$ and
is undefined on $c_{n-1}$.
The \emph{value} of $T$ under a configuration $R$ is $\sum_{c_i \in R} 2^i$.

An \emph{$n$-bit binary counter} $N$ is constructed as follows. Three new
$n$-bit tapes $S$, $T$ and $\ov T$ are introduced. Their cells are $(d_0, d_1,
\dots, d_{n-1})$, $(c_0, c_1, \dots, c_{n-1})$ and $(\ov c_0, \ov c_1, \dots,
\ov c_{n-1})$, respectively. Then, the union of $S$, $T$ and $\ov T$ is
constructed and the following instructions are added:

\medskip
\begin{itemize}
  \setlength\multicolsep{0pt}
  \begin{multicols}{3}
    \item $\irotl^N = \irotl^T \irotl^{\ov T} \irotl^S$,
    \item $\irotr^N = \irotr^T \irotr^{\ov T} \irotr^S$,
    \item $\icz^N = \icz^T$,
    \item $\ico^N = \icz^{\ov T}$,
    \item $\isync^N = \isync^S$,
    \item $\idisp^N = \icz^S$,
  \end{multicols}
  \item $\iinc^N$ with $\ov c_0 \cdot \iinc^N = c_0$ and $c_0 \cdot \iinc^N$
    undefined and $c \cdot \iinc^N = c$ for all $c \not\in \os{c_0, \ov c_0}$,
  \item $\idec^N$ with $c_0 \cdot \idec^N = \ov c_0$ and $\ov c_0 \cdot
    \idec^N$ undefined and $c \cdot \idec^N = c$ for all $c \not\in \os{c_0,
    \ov c_0}$.
\end{itemize}
\smallskip

\noindent
Following this, the original instructions of $S$, $T$ and $\ov T$ are removed
from $I$.
Thus, a binary counter provides exactly eight instructions.
A configuration $R$ of $N$ is \emph{valid} if $\abs{R \intersect S} = 1$ and
for each $i \in \os{0, \dots, n-1}$, we have $c_i \in R$ if and only if $\ov
c_i \not\in R$.
\begin{lemma}\label{lem:valid}
  Let $R$ be a valid configuration of a binary counter $N$ and let $u \in I^*$
  such that $\abs{R \cdot u} = \abs{R}$. Then $R \cdot u$ is a valid
  configuration of $N$.
\end{lemma}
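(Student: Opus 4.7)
The plan is to proceed by induction on $\abs{u}$. The base case $u = \varepsilon$ is immediate. For the inductive step, write $u = u' \iota$ with $\iota \in I$. Since the application of any partial transformation cannot increase the size of a set, $\abs{R \cdot u} = \abs{R}$ forces $\abs{R \cdot u'} = \abs{R}$ as well. Thus, by the induction hypothesis, $R' = R \cdot u'$ is valid, and it suffices to show the following claim: if $R'$ is a valid configuration and $\iota \in I$ is an instruction of $N$ with $\abs{R' \cdot \iota} = \abs{R'}$, then $R' \cdot \iota$ is valid.

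The rest is a routine case analysis over the eight instructions of $N$, using only that validity imposes (a) exactly one cell of $S$ lies in $R'$, and (b) the tapes $T$ and $\ov T$ are pointwise complementary inside $R'$. For $\irotl^N$ and $\irotr^N$, the constituent rotations are total bijections on each individual tape, so $R' \cdot \iota$ is obtained from $R'$ by a coordinated index shift that preserves both the singleton intersection with $S$ and the complementarity of $T$ and $\ov T$. For the four ``check''-type instructions $\icz^N$, $\ico^N$, $\isync^N$, $\idisp^N$, the instruction is identity on every cell where it is defined, so preservation of cardinality forces the unique undefined cell to lie outside $R'$; hence $R' \cdot \iota = R'$ is valid.

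The only genuinely interesting cases are $\iinc^N$ and $\idec^N$, and this is where I expect the main (but still minor) obstacle. For $\iinc^N$, the instruction is undefined on $c_0$ and identity off $\os{c_0, \ov c_0}$. Cardinality preservation therefore forces $c_0 \notin R'$, and validity of $R'$ then forces $\ov c_0 \in R'$. The effect of $\iinc^N$ on $R'$ is exactly to replace $\ov c_0$ by $c_0$, which preserves property~(b) at position $0$, leaves every other pair $(c_i, \ov c_i)$ untouched, and does not affect $R' \intersect S$. The argument for $\idec^N$ is symmetric: cardinality preservation forces $\ov c_0 \notin R'$, validity forces $c_0 \in R'$, and applying $\idec^N$ swaps these two cells. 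In both cases $R' \cdot \iota$ is valid, which closes the induction.
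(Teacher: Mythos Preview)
Your proof is correct and follows essentially the same approach as the paper: induction on the length of $u$, reducing to a case analysis over the eight instructions, with the rotations, the identity-where-defined ``check'' instructions, and the $\iinc^N$/$\idec^N$ pair handled exactly as in the paper. One minor descriptive slip: $\isync^N$ is undefined on all of $d_1,\dots,d_{n-1}$, not on a single cell, but your argument (identity where defined plus cardinality preservation forces $R'\cdot\iota=R'$) goes through unchanged.
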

\begin{proof}
  By induction on the length of $u$, it suffices to prove that the action of
  instructions on $R$ preserves validity.

  The instructions $\irotl^N$ and $\irotr^N$ cyclically rotate the tapes $T$,
  $\ov T$ and $S$. Thus, if $R$ is valid, then $R \cdot \irotl^N$ and $R \cdot
  \irotr^N$ are valid as well.

  For each $\iota \in \os{\icz^N, \ico^N, \isync^N, \idisp^N}$, we have either
  $\abs{R \cdot \iota} < \abs{R}$ or $R \cdot \iota = R$.

  If $\abs{R \cdot \iinc^N} = \abs{R}$, then $R$ does not contain $c_0$. If,
  moreover, $R$ is a valid configuration, it contains $\ov c_0$. But then, $R
  \cdot \iinc^N$ contains $c_0$ and does not contain $\ov c_0$. It coincides
  with $R$ on all other cells.
  Thus, $R \cdot \iinc^N$ is valid as well. By a symmetric argument, the
  instruction $\idec^N$ preserves validity.
\end{proof}
We now define three regular languages
\begin{align*}
  \Lreset^N & = \isync^N ((\icz^N \mid \idec^N) \irotr^N \idisp^N)^* (\icz^N \mid \idec^N) \irotr^N \isync^N, \\
  \Linc^N & = \isync^N (\idec^N \irotr^N \idisp^N)^* \iinc^N (\idisp^N \irotr^N)^* \isync^N \text{~and} \\
  \Ldec^N & = \isync^N (\iinc^N \irotr^N \idisp^N)^* \idec^N (\idisp^N \irotr^N)^* \isync^N.
\end{align*}

\begin{lemma}\label{lem:deterministic}
  The languages $\Lreset^N$, $\Linc^N$ and $\Ldec^N$ are deterministic on all
  valid configurations.
\end{lemma}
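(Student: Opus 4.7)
The plan is to show that on any valid configuration $R$, every prefix of a size-preserving word admits at most one size-preserving one-symbol extension; hence no two distinct words in $\Lreset^N$, $\Linc^N$, or $\Ldec^N$ can both act size-preservingly starting from~$R$. By Lemma~\ref{lem:valid}, every intermediate configuration reached along such a word is again valid, so this local analysis is sufficient.

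The first step is to tabulate when each instruction of $N$ preserves cardinality on a valid~$R$: the rotations $\irotl^N, \irotr^N$ always do; $\isync^N$ requires $d_0 \in R$ while $\idisp^N$ requires $d_0 \notin R$; $\icz^N$ and $\iinc^N$ require $c_0 \notin R$; and $\ico^N$ and $\idec^N$ require $c_0 \in R$. On a valid configuration the conditions $c_0 \in R$ and $c_0 \notin R$ are mutually exclusive, and likewise for $d_0$, so each pair above is jointly unsatisfiable.

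For the main argument, suppose for contradiction that $u \ne u'$ both lie in one of the three languages, both preserve cardinality on a valid~$R$, and let $i$ be their first differing position. After the common prefix of length $i-1$ both words have reached the same valid configuration~$R'$, and the two next symbols $\iota \ne \iota'$ must each preserve its cardinality. Direct inspection of the three regular expressions shows that at every branching point the two alternatives form one of the following pairs: $\icz^N$ versus $\idec^N$ (inside the body of $\Lreset^N$); $\idec^N$ versus $\iinc^N$, or $\iinc^N$ versus $\idec^N$ (at the exit of the first loop of $\Linc^N$ or $\Ldec^N$); or $\idisp^N$ versus $\isync^N$ (at every other exit of a Kleene-starred block in the three languages). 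Each pair splits valid configurations according to $c_0$ or $d_0$, so only one alternative is actually applicable to $R'$, contradicting $\iota \ne \iota'$. The main obstacle is the exhaustive enumeration of branching points; it is slightly delicate in $\Lreset^N$, where the loop body and the trailing block share the prefix $(\icz^N \mid \idec^N)\, \irotr^N$ and the genuine loop-versus-exit decision is only visible at the subsequent $\idisp^N$-versus-$\isync^N$ position, which is already covered by the $d_0$-dichotomy above.
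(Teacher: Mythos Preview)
Your proof is correct and follows essentially the same approach as the paper's: identify the first point of divergence between two candidate words, observe that the competing instructions at each branching point form one of the pairs $\{\icz^N,\idec^N\}$, $\{\iinc^N,\idec^N\}$, or $\{\idisp^N,\isync^N\}$, and use validity (via Lemma~\ref{lem:valid}) to conclude that only one member of each pair can preserve cardinality. The paper argues asymmetrically (assume $u_1$ preserves cardinality, show $u_2$ does not) whereas you argue symmetrically (assume both do, derive a contradiction), and you tabulate the applicability conditions for all instructions upfront rather than deriving them case-by-case; these are purely organisational differences. One small point you leave implicit is that the three languages are prefix-free (each word begins and ends with $\isync^N$, with no interior occurrence), which is what guarantees that two distinct words actually have a first differing position rather than one being a prefix of the other; the paper states this explicitly.
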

\begin{proof}
  Suppose there are two different words $u_1, u_2 \in \Lreset^N$ and a valid
  configuration $R$ such that $\abs{R \cdot u_1} = \abs{R}$.
  Since $\Lreset^N$ is prefix-free, there exist a unique word $p \in I^*$ and
  different instructions $\iota_1, \iota_2 \in I$ such that $u_1 \in p \iota_1
  I^*$ and $u_2 \in p \iota_2 I^*$.
  A careful analysis of the structure of the regular expression for $\Lreset^N$
  shows that either $\os{\iota_1, \iota_2} = \os{\icz^N, \idec^N}$ or
  $\os{\iota_1, \iota_2} = \os{\idisp^N, \isync^N}$.

  In the first case, we may assume without loss of generality that $\iota_1 =
  \icz^N$ and $\iota_2 = \idec^N$. From $\abs{R \cdot p \iota_1} = \abs{R \cdot
  p}$, we deduce $c_0 \not\in R \cdot p$ because $\icz^N$ is undefined on
  $c_0$.  This implies $\ov c_0 \in R \cdot p$ since $R \cdot p$ is a valid
  configuration by Lemma~\ref{lem:valid}. Since $\idec^N$ is undefined on $\ov
  c_0$, it follows that $\abs{R \cdot u_2} \le \abs{R \cdot p \iota_2} < \abs{R
  \cdot p} \le \abs{R}$.

  In the second case, we may assume that $\iota_1 = \idisp^N$ and $\iota_2 =
  \isync^N$. Since $\abs{R \cdot p \iota_1} = \abs{R \cdot p}$ and since
  $\idisp^N$ is undefined on $d_0$, we have $d_0 \not\in R \cdot p$. This
  implies $d_i \in R \cdot p$ for some $i \in \os{1, \dots, n-1}$ because $R
  \cdot p$ is valid by Lemma~\ref{lem:valid}. The instruction $\isync^N$ is
  undefined on $\os{d_1, d_2, \dots, d_{n-1}}$ which yields $\abs{R \cdot u_2}
  < \abs{R}$, as above.

  The proofs for $\Linc^N$ and $\Ldec^N$ follow by a similar reasoning.
\end{proof}
Let $R$ be a configuration of $N$.
We say that the counter is \emph{synchronized} under $R$ if $d_0 \in R$. The
\emph{value} of $N$ under $R$ is the value of $T$ under $R \intersect \os{c_0,
  c_1, \dots, c_{n-1}}$.

In addition to the eight counter instructions defined above, for any fixed
constant $k \in \os{0, \ldots, 2^n-1}$ one can define an instruction
$\ival{k}^N$ which asserts that the value of the counter equals $k$ as follows.
For each $i \in \os{0, \dots, n-1}$ with $k \bmod 2^{i+1} \ge 2^i$, we let $c_i
\cdot \ival{k}^N = c_i$ and let $\ov c_i \cdot \ival{k}^N$ be undefined.
Symmetrically, we let $\ov c_i \cdot \ival{k}^N = \ov c_i$ and $c_i \cdot
\ival{k}^N$ undefined if $k \bmod 2^{i+1} < 2^i$.

\begin{lemma}\label{lem:reset}
  Let $R$ be a valid configuration and let $u \in \Lreset^N$ such that
  $\abs{R \cdot u} = \abs{R}$.
  Then, under $R \cdot u$, the counter is synchronized and its value is zero.
\end{lemma}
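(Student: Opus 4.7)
The plan is to unfold the regular expression defining $\Lreset^N$ and track two things independently through the prefixes of $u$: the position of the marker on the $S$-tape, and the bit pattern on $T$. Since $\abs{R \cdot u} = \abs{R}$, every prefix of $u$ preserves size as well, and Lemma~\ref{lem:valid} ensures that every intermediate configuration is valid; in particular $R \intersect S$ is always a singleton $\os{d_p}$, so the marker position $p$ is well-defined throughout the computation.

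First I would parse $u = \isync^N \cdot (\iota_1 \irotr^N \idisp^N) \cdots (\iota_k \irotr^N \idisp^N) \cdot \iota_{k+1} \irotr^N \isync^N$ with $k \ge 0$ and $\iota_j \in \os{\icz^N, \idec^N}$. The instruction $\isync^N = \isync^S$ is defined only on $d_0$, so the leading $\isync^N$ forces $d_0 \in R$: the marker starts at $d_0$. Each $\irotr^N$ moves the marker from $d_p$ to $d_{p-1 \bmod n}$, so after the leading $\isync^N$ followed by $j$ further rotations the marker sits at $d_{(n-j) \bmod n}$. Now $\idisp^N = \icz^S$ is defined exactly when the marker is not at $d_0$, which must hold after every rotation inside the loop; hence $j \not\equiv 0 \pmod{n}$ for $j = 1, \dots, k$, giving $k \le n-1$. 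Dually, the trailing $\isync^N$ requires the marker to return to $d_0$ after $k+1$ rotations, so $k+1 \equiv 0 \pmod{n}$. Together these force $k = n-1$, so the computation performs exactly $n$ bit-clearing steps and $n$ right-rotations.

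Next I would analyze the bit-clearing steps on $T$. Either $\iota_j = \icz^N = \icz^T$, which is defined only when $c_0 \notin R$ and acts as the identity, or $\iota_j = \idec^N$, which is defined only when $c_0 \in R$ and then removes $c_0$ while adding $\ov c_0$; in either case, immediately after $\iota_j$ the bit at position $0$ of $T$ equals $0$. Since $\irotr^N$ rotates $T$ (and $\ov T$) in lock-step with $S$, in the $j$-th iteration the bit at position $0$ is the original bit at position $j-1$, so the successive iterations read and clear the $n$ bits of $T$ in order. The closing $\irotr^N$ brings the total rotation count to $n$, returning $T$ to its original orientation but now with every bit equal to $0$.

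Combining the two analyses: in $R \cdot u$ the $S$-marker is at $d_0$ (so $N$ is synchronized) and no $c_i$ lies in the configuration (so the value of $N$ is $0$), as claimed. The main work is the combinatorial argument pinning down $k = n-1$ from the complementary conditions on the marker at loop-internal $\idisp^N$'s versus the trailing $\isync^N$; the remaining reasoning is direct unwinding of the definitions of the eight counter instructions together with the validity invariant supplied by Lemma~\ref{lem:valid}.
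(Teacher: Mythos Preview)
Your proof is correct and follows essentially the same approach as the paper: use the $\isync^N/\idisp^N$ constraints on the $S$-marker to pin the number of rotations to exactly $n$, observe that each $\iota_j\in\os{\icz^N,\idec^N}$ leaves position~$0$ of $T$ cleared, and conclude that after a full cycle the value is~$0$ and the final $\isync^N$ certifies synchronization. The paper compresses the first step into ``it is easy to see'' and phrases the second via codomains (neither instruction has $c_0$ in its image); you spell out the marker argument forcing $k=n-1$ and use validity (Lemma~\ref{lem:valid}) to make the case split on $\iota_j$ exhaustive, but the underlying argument is the same.
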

\begin{proof}
  It is easy to see that each word $u \in \Lreset^N$ with $\abs{R \cdot u} =
  \abs{R}$ cyclically rotates the three tapes of $N$ exactly $n$ times and
  after each cyclic rotation, either $\icz^N$ or $\idec^N$ is applied. The
  codomains of both $\icz^N$ and $\idec^N$ do not contain $c_0$ and thus, we
  have $R \cdot u \intersect \os{c_0, c_1, \dots, c_{n-1}} = \emptyset$ which
  is equivalent to saying that the value under $R \cdot u$ is zero.
  To see that the counter is synchronized, note that applying $\isync^N$ to a
  valid configuration preserves the number of elements if and only if the
  configuration is synchronized.
\end{proof}

\begin{lemma}\label{lem:inc}
  Let $R$ be a valid configuration and let $u \in \Linc^N$ such that $\abs{R
  \cdot u} = \abs{R}$.
  If $v$ is the value of the counter under $R$ and $v'$ is its value under $R
  \cdot u$, we have $v' = v + 1 \le 2^n - 1$.
\end{lemma}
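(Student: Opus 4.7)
The plan is to trace the configuration step by step through $u$, using Lemma~\ref{lem:valid} to guarantee that every intermediate configuration remains valid, and using the structure of the regular expression to determine the form of $u$. By the definition of $\Linc^N$, any $u \in \Linc^N$ factors uniquely as $u = \isync^N \,(\idec^N \irotr^N \idisp^N)^k \,\iinc^N \,(\idisp^N \irotr^N)^m\, \isync^N$ for some $k, m \ge 0$. I will show that the assumption $\abs{R \cdot u} = \abs{R}$ forces $k$ to equal the number of trailing $1$-bits of $v$, forces $m = n - k$, and forces $v < 2^n - 1$.

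First I would observe how $\irotr^N$ acts on all three tapes: it sends $c_i \to c_{i-1 \bmod n}$, $\ov c_i \to \ov c_{i-1 \bmod n}$, and $d_i \to d_{i-1 \bmod n}$. Since the initial $\isync^N$ requires $d_0$ to be in the current configuration, after $j$ iterations of the first loop the unique $d$-token sits at $d_{n-j \bmod n}$; the $\idisp^N$ check after rotation requires this index to be nonzero, so the first loop iterates at most $n-1$ times. During iteration $j$, the $\idec^N$ requires $c_0$ to be present at that moment; since after $j-1$ prior right-rotations position $0$ of the $T$-tape carries the originally-$(j-1)$-th bit $b_{j-1}$, this forces $b_{j-1} = 1$, and $\idec^N$ flips that bit to $0$. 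Hence after $k$ successful loop iterations, the bits $b_0, \dots, b_{k-1}$ of $T$ have been zeroed, and the token (or its absence) at current position $0$ reflects the original $b_k$.

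Next I would analyze the exit. The instruction $\iinc^N$ requires $\ov c_0$ to be present, i.e., $b_k = 0$; this alone yields $v \le 2^n - 2$, establishing $v + 1 \le 2^n - 1$. After $\iinc^N$ flips that bit from $0$ to $1$, the second loop performs $m$ further right rotations while requiring $d_0$ to be absent each time, and the final $\isync^N$ requires $d_0$ to be present again; since the $d$-head cycles back to position $0$ precisely after $n$ total rotations, $m = n - k$. Thus the $T$-tape returns to its original orientation, and the net effect in terms of the original bit positions is: $b_0, \dots, b_{k-1}$ are reset from $1$ to $0$, $b_k$ is set from $0$ to $1$, and $b_{k+1}, \dots, b_{n-1}$ are untouched. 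A one-line computation then gives $v' = v - (2^k - 1) + 2^k = v + 1$.

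The bookkeeping of three simultaneously rotating tapes is the only real subtlety; the conclusion about bits then follows from a direct trace. Once that correspondence between the current position-$0$ bit and an originally-indexed bit $b_j$ is in hand, the value computation is purely arithmetical. I expect the main obstacle to be writing the rotation identities for the $T$- and $\ov T$-tapes cleanly enough that validity (via Lemma~\ref{lem:valid}) guarantees the "$b_k = 0$" reading of the $\iinc^N$ precondition without a separate case analysis.
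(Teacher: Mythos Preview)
Your approach is correct and essentially the same as the paper's: the paper first fixes $i$ as the least index with $c_i \notin R$, writes down the candidate word $w = \isync^N (\idec^N \irotr^N \idisp^N)^i \iinc^N (\idisp^N \irotr^N)^{n-i} \isync^N$, verifies $\abs{R\cdot w}=\abs{R}$, and then invokes Lemma~\ref{lem:deterministic} to conclude $u=w$, whereas you read off $k$ and $m$ directly from the size-preservation constraints---but the underlying trace through the three tapes and the final arithmetic $v' = v - (2^k-1)+2^k$ are identical. One small edge case (which the paper's proof shares): when $k=0$ the first $\idisp^N$ of the second loop is applied with the $S$-token still at $d_0$, so in fact $m=0$ rather than $m=n$; this does not affect your conclusion, since in either case the total number of rotations is a multiple of $n$ and the tape returns to its original orientation.
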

\begin{proof}
  Let us first assume that $v < 2^n - 1$. Let $i \in \os{0, \dots, n-1}$ be
  minimal such that $c_i \not\in R$ and let
  \begin{equation*}
    w = \isync^N (\idec^N \irotr^N \idisp^N)^i \iinc^N (\idisp^N \irotr^N)^{n-i} \isync^N.
  \end{equation*}

  We claim that $u = w$.
  By Lemma~\ref{lem:deterministic}, it suffices to show that $\abs{R \cdot w}
  = \abs{R}$. Let us first investigate the instructions operating on $S$. The
  word starts with an $\isync^N$ instruction, each $\idisp^N$ instruction is
  applied after $R$ has been rotated cyclically $1$ to $n-1$ times and the
  second $\isync^N$ instruction is applied after exactly $n$ cyclic rotations.
  We deduce $\abs{R \cdot \isync^N} = \abs{R}$ from $\abs{R \cdot u} =
  \abs{R}$, and thus, the counter is synchronized on both $R$ and on the
  configuration reached before the last $\isync^N$ instruction.  Moreover,
  whenever a $\idisp^N$ instruction is applied to a configuration $R'$, we have
  $d_i \in R'$ for some $i \in \os{1, \dots, n-1}$.
  Note that the case $v = 2^n - 1$ can be excluded since in order for the
  $\iinc^N$ instruction to preserve the number of elements in the
  configuration, it would have to be preceded by at least $n$ $\irotr^N
  \idisp^N$-factors and one of those factors would reduce the number of
  elements.

  The instruction $\idec^N$ is applied exactly once before each of the first
  $i$ cyclic rotations. Since $\os{c_0, c_1, \ldots, c_{i-1}} \subseteq R$, we
  have $c_0 \in R \cdot \isync^N (\idec^N \irotr^N \idisp^N)^j$ for all $j \in
  \os{0, \ldots, i-1}$.
  Moreover, since $c_i \not\in R$, we have $c_0 \not\in R \cdot \isync^N
  (\idec^N \irotr^N \idisp^N)^i$ which implies $\ov c_0 \in R \cdot \isync^N
  (\idec^N \irotr^N \idisp^N)^i$ by Lemma~\ref{lem:valid}. Consequently, the
  occurrences of $\idec^N$ and $\iinc^N$ in $w$ do not reduce the number of
  elements in the configuration.
  The above observations also show that
  \begin{equation*}
    R \cdot u = R \cdot w = \os{c_i} \union (R \intersect \os{c_{i+1}, c_{i+2}, \dots, c_{n-1}})
  \end{equation*}
  which is equivalent to the claim $v' = v + 1$.
\end{proof}
For the $\idec^N$ instruction, a symmetric version of the lemma holds.
\begin{lemma}\label{lem:dec}
  Let $R$ be a valid configuration and let $u \in \Ldec^N$ such that $\abs{R
  \cdot u} = \abs{R}$.
  If $v$ is the value of the counter under $R$ and $v'$ is its value under $R
  \cdot u$, we have $v' = v - 1 \ge 0$.
\end{lemma}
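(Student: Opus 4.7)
The plan is to mirror the proof of Lemma~\ref{lem:inc} using the left-right symmetry between the increment and decrement languages. First I would let $i$ be minimal such that $c_i \in R$; this is the position of the lowest set bit of $v$. The case $v = 0$ is ruled out as follows: if all $c_j \not\in R$, then by validity all $\ov c_j \in R$, and any $u \in \Ldec^N$ with $\abs{R \cdot u} = \abs{R}$ must contain exactly one $\idec^N$ occurrence. After some number $k$ of $\iinc^N \irotr^N \idisp^N$-factors the bit currently at position~$0$ is the original bit $k$ of $T$ (which is $0$, hence $\ov c_0$ is at the head and $c_0$ is not), so the subsequent $\idec^N$ would strictly reduce the cardinality, a contradiction. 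Thus $v \ge 1$, giving $v' = v-1 \ge 0$.

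For $v > 0$, I would define the candidate
\begin{equation*}
  w = \isync^N (\iinc^N \irotr^N \idisp^N)^i \idec^N (\idisp^N \irotr^N)^{n-i} \isync^N \in \Ldec^N
\end{equation*}
and show $\abs{R \cdot w} = \abs{R}$ by the same bookkeeping as in Lemma~\ref{lem:inc}. The role of the $S$-tape is identical: the initial $\isync^N$ succeeds because $R$ is synchronized, each intermediate rotation displaces the $S$-marker from $d_0$ so that every $\idisp^N$ preserves cardinality, and after exactly $n$ rotations the marker returns to $d_0$, making the final $\isync^N$ succeed. On the $T$ and $\ov T$ tapes, the minimality of $i$ ensures that for each $j \in \os{0, \dots, i-1}$ the bit originally at position $j$ is $0$, so $\ov c_0$ is at the head when each $\iinc^N$ fires, and $c_i$ is at the head when $\idec^N$ fires; both instructions therefore preserve validity and cardinality by Lemma~\ref{lem:valid}.

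Having verified $\abs{R \cdot w} = \abs{R}$, Lemma~\ref{lem:deterministic} forces $u = w$. Reading off the resulting configuration on $T$,
\begin{equation*}
  R \cdot u \intersect \os{c_0, \dots, c_{n-1}} = \os{c_0, c_1, \dots, c_{i-1}} \union \bigl(R \intersect \os{c_{i+1}, \dots, c_{n-1}}\bigr),
\end{equation*}
which replaces the lowest set bit at position $i$ by zeros above position $i$ unchanged and all ones below; this is precisely the binary representation of $v - 1$.

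The main obstacle is the step-by-step verification that no factor of $w$ reduces the cardinality, and this is exactly the work already done for Lemma~\ref{lem:inc}; once the pattern there is transcribed with $\iinc^N$ and $\idec^N$ swapped, the remainder is routine. The only genuinely new observation is the exclusion of $v = 0$, which has no counterpart in Lemma~\ref{lem:inc} because the case $v = 2^n - 1$ there is ruled out by an analogous rotation-counting argument rather than by the head-bit check used here.
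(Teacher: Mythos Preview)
Your proposal is correct and matches the paper's approach: the paper gives no explicit proof for this lemma, stating only that ``a symmetric version of the lemma holds,'' and your argument is precisely the detailed dualization of the proof of Lemma~\ref{lem:inc}. One small remark: you justify the initial $\isync^N$ succeeding by saying ``$R$ is synchronized,'' but synchronization is not part of the hypothesis---it follows (as in the paper's proof of Lemma~\ref{lem:inc}) from $\abs{R \cdot u} = \abs{R}$ together with the fact that every word in $\Ldec^N$ begins with $\isync^N$; also, your head-bit argument for excluding $v=0$ tacitly uses $k < n$, which is guaranteed because the $n$-th $\idisp^N$ would fail---exactly the rotation-counting bound you cite from Lemma~\ref{lem:inc}.
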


\subsection{Main Result}
\label{sec:main}

Let $n \in \mathbb{N}$ be an even number.
Let $T$ be an $n$-bit tape with cells $(t_0, t_1, \dots, t_{n-1})$. The union
of $T$ with three $\ceil{\log_2 n}$-bit counters $P$, $Q$ and $Z$ forms a token
machine, henceforth referred to as $U$.
A configuration of $U$ is \emph{valid} if it is valid when restricted to each
of the three counters.

Informally, the idea of our construction is the following: as in the proof of
Theorem~\ref{thm:lower}, we enumerate all $n/2$-element subsets of an
$n$-element set on the tape $T$.
In order to do so with a constant number of generators, this enumeration needs
to be done in a very specific way.
We say that a word $Y \in \os{0, 1}^*$ is a \emph{successor} of $X \in \os{0,
1}^*$ if there exist $p \in \os{0, 1}^*$, $i \ge 1$ and $j \ge 0$ such that $X
= p01^i 0^j$ and $Y = p10^{j+1}1^{i-1}$.
For each $m \in \os{0, 1, \dots, n}$ one can define a sequence of bit strings
$(X_0, X_1, \dots, X_\ell)$ as stated in the following lemma:

\begin{lemma}
  For all $n \in \mathbb{N}$ and $m \in \os{0, 1, \dots, n}$, there exists a
  unique sequence $(X_0, X_1, \dots, X_\ell)$ such that
  \begin{itemize}
    \item $X_0 = 0^{n-m} 1^m$,
    \item for each $k \in \os{1, \dots, \ell}$, $X_k$ is a successor of $X_{k-1}$ and
    \item $X_\ell$ does not have a successor.
  \end{itemize}
  The terms of this sequence are pairwise distinct, each term contains exactly
  $m$ occurrences of the letter $1$, and we have $\ell = \binom{n}{m}$ as well
  as $X_\ell = 1^m 0^{n-m}$.
  \label{lem:sequence}
\end{lemma}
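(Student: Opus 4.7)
The plan is to prove the lemma in three stages: uniqueness of the successor operation, termination of the process, and surjectivity onto all $m$-element subsets, which will pin down the length.

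First I would observe that for any bit string $X$, the decomposition $X = p\,0\,1^i\,0^j$ with $i \ge 1$ and $j \ge 0$ is unique when it exists: $j$ is forced to be the number of trailing zeros of $X$, $i$ is forced to be the number of $1$'s immediately preceding that trailing block of zeros, and the condition $i \ge 1$ is exactly the statement that $X$ contains an occurrence of ``01''. This shows that the successor of $X$ is uniquely determined when it exists, and that a successor exists if and only if $X$ is not of the form $1^a 0^b$. A one-line count shows the operation preserves the number of $1$'s (both sides contain $|p|_1 + i$ of them), so every term inherits $m$ ones from $X_0$, and uniqueness of the successor inductively gives uniqueness of the sequence.

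Next I would define the integer valuation $\mathrm{val}(X)$ obtained by reading $X$ as a binary number, and verify that $\mathrm{val}(Y) > \mathrm{val}(X)$ whenever $Y$ is the successor of $X$: this is a short comparison between the suffixes $0\,1^i\,0^j$ and $1\,0^{j+1}\,1^{i-1}$, since the two strings agree on the prefix $p$ and the latter suffix has a strictly larger integer value. Strict monotonicity of $\mathrm{val}$ makes the $X_k$ pairwise distinct, and since there are only finitely many $n$-bit strings with exactly $m$ ones, the sequence terminates. Its last term $X_\ell$ has no successor, so by the first step it has the form $1^a 0^b$; the constraint of $m$ ones then forces $X_\ell = 1^m 0^{n-m}$.

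The main obstacle, and the real content of the lemma, is showing that every string with $m$ ones actually appears in the sequence, which is what forces the length count $\binom{n}{m}$. For this I would introduce a \emph{predecessor} operation dual to the successor: given $Y = q\,1\,0^a\,1^b$ with $a \ge 1$ and $b \ge 0$, set $\mathrm{pred}(Y) = q\,0\,1^{b+1}\,0^{a-1}$. By the same ``read from the right'' argument, this decomposition of $Y$ is unique, and a predecessor exists exactly when $Y$ is not of the form $0^a 1^b$, i.e., when $Y \ne X_0$ among strings with $m$ ones. Moreover, $\mathrm{pred}$ is a genuine inverse of the successor map, so $X = \mathrm{pred}(Y)$ satisfies $\mathrm{val}(X) < \mathrm{val}(Y)$ and has $Y$ as its successor.

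Finally I would close the argument by strong induction on $\mathrm{val}(Y)$, ranging over bit strings of length $n$ with exactly $m$ ones: the base case $Y = X_0$ is trivial, and for the inductive step, $\mathrm{pred}(Y)$ has strictly smaller valuation, so it occurs in the sequence by hypothesis, whence $Y$ occurs as its immediate successor. Combined with distinctness, every one of the $\binom{n}{m}$ strings with $m$ ones appears exactly once in $(X_0, \dots, X_\ell)$, which determines $\ell$ and completes the proof.
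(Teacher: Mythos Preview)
Your proof is correct and takes a genuinely different route from the paper's. The paper argues by induction on $n$: it splits the sequence at the first index $s$ where the leading bit flips from $0$ to $1$, applies the induction hypothesis to the length-$(n-1)$ suffixes of $X_0,\dots,X_s$ and of $X_{s+1},\dots,X_\ell$ separately, and combines the two counts via Pascal's rule $\binom{n-1}{m} + \binom{n-1}{m-1} = \binom{n}{m}$. You instead work globally: the integer valuation gives strict monotonicity (hence distinctness and termination) in one stroke, and the explicit predecessor map $q\,1\,0^a\,1^b \mapsto q\,0\,1^{b+1}\,0^{a-1}$ lets you run the successor chain backwards from an arbitrary weight-$m$ string down to $X_0$, so the sequence is precisely an enumeration of all $\binom{n}{m}$ such strings. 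Your argument is more self-contained and makes the bijection with weight-$m$ strings explicit; the paper's inductive approach, by contrast, exposes the recursive structure of the enumeration and its link to Pascal's triangle more directly.
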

\begin{proof}
  First observe that if a word $X \in \os{0,1}^*$ can be factorized as $X =
  p01^i 0^j$ with $p \in \os{0, 1}^*$ and $i \ge 1$ and $j \ge 0$, then this
  factorization is unique. As a consequence, the sequence defined above is
  unique and its terms are pairwise distinct. It is also easy to see that if
  $Y$ is a successor of $X$, then $X$ and $Y$ contain the same number of $1$'s.
  The remaining two properties $\ell = \binom{n}{m}$ and $X_\ell = 1^m
  0^{m-n}$ clearly hold if $n = 0$ or $m \in \os{0, n}$.

  We now assume $n \ge 1$, as well as $m \in \os{1, \dots, n-1},$ and proceed
  by induction on $n$. Let $s \in \os{0, \dots, n}$ such that $X_0, X_1, \dots,
  X_s \in 0 \os{0,1}^{n-1}$ and $X_{s+1}, X_{s+2}, \dots, X_\ell \in 1
  \os{0,1}^{n-1}$. Applying the induction hypothesis to the suffixes of length
  $n-1$ of $X_0$, $X_1$, \ldots, $X_s$, we know that $s = \binom{n-1}{m}$ and
  $X_s = 0 1^{m} 0^{(n-1)-m}$.
  This yields $X_{s+1} = 1 0^{n-m} 1^{m-1}$ and by applying induction again to
  the suffixes of $X_{s+1}$, $X_{s+2}$, \ldots, $X_\ell$, we obtain $\ell-s =
  \binom{n-1}{m-1}$ as well as $X_\ell = 1 1^{m-1} 0^{(n-1)-(m-1)} = 1^m
  0^{n-m}$.
  Note that by Pascal's rule, $\ell = \ell - s + s = \binom{n-1}{m-1} +
  \binom{n-1}{m} = \binom{n}{m}$ which concludes the proof.
\end{proof}

Note that the sequence corresponds to binary counting and deleting all counter
values not having $m$ bits equal $1$.
Since we are interested in enumerating $n/2$-element subsets, we only consider
the case $m = n/2$.
Interpreting the bit strings $X_k$ as $n/2$-element subsets of an $n$-element
set, the sequence $(X_0, X_1, \dots, X_\ell)$ describes our enumeration order.
Thus, all configurations appearing in the computation always contain $n/2$
elements when restricted to $T$. The counter $P$ keeps track of the position of
the head on $T$. It is needed for moving a block of $1$-bits as far to the
right as possible when transitioning from $X_{k-1}$ to $X_k$.
The volatile counters $Q$ and $Z$ are only used by the following macro that
checks whether the bit below the tape head of $T$ is $1$.
\begin{equation*}
  \Lco = \irotr^T ((\varepsilon \mid \icz^T \Linc^Z) \irotr^T \Linc^Q)^* \ival{n-1}^Q \ival{n/2}^Z \Lreset^Q \Lreset^Z.
\end{equation*}
Roughly speaking, a word from $\Lco$, which preserves the cardinality of the
configuration, rotates the tape $T$ cyclically $n$ times. The counter $Q$ is
used to ensure that neither more nor less rotations are performed.  After each
rotation, except for the last one, the counter $Z$ is increased
non-deterministically if the bit under the tape head is $0$. Then, the value of
$Z$ is checked to be exactly $n/2$. Since we know that the number of $0$-bits
on $T$ is $n/2$ and since the bit under the tape head cannot contribute to the
value of $Z$, this is only possible if the bit under the tape head is set.
More precisely, the following lemma holds.
\begin{lemma}\label{lem:co}
  Let $R$ be a valid configuration such that $\abs{R \intersect T} = n/2$, the
  counters $P$ and $Q$ are synchronized and the values of $P$ and $Q$ are zero.
  Then there exists a word $u \in \Lco$ with $\abs{R \cdot u} = \abs{R}$ if and
  only if $t_0 \in R$. Moreover, if such a word $u$ exists, it is unique and we
  have $R \cdot u = R$.
\end{lemma}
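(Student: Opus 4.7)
The plan is a step-by-step cardinality analysis of each instruction in an arbitrary word $u \in \Lco$. Parsing $u$ according to its regular expression, I would write $u = \irotr^T w_1 w_2 \cdots w_k \cdot \ival{n-1}^Q \ival{n/2}^Z \Lreset^Q \Lreset^Z$ with each $w_i \in \os{\irotr^T \Linc^Q,\ \icz^T \Linc^Z \irotr^T \Linc^Q}$. The rotation $\irotr^T$ always preserves cardinality, so the first goal is to pin down $k$. By Lemma~\ref{lem:inc}, every cardinality-preserving occurrence of $\Linc^Q$ increments the value of $Q$ by one (using that $Q$ starts synchronized with value $0$ and stays synchronized after each $\Linc^Q$, since the latter is framed by $\isync^N$ instructions); hence after $k$ loop iterations the value of $Q$ equals $k$. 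The subsequent check $\ival{n-1}^Q$ preserves cardinality only when $Q = n-1$, which forces $k = n-1$; consequently $T$ is rotated exactly $n$ times and $R \intersect T$ returns to its original state.

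Next I would analyse the loop body. By a short induction on $i$, at the moment the $\icz^T$-branch inside $w_i$ is executed, the cell currently at position $0$ of $T$ holds the original content of $t_i$; hence $\icz^T$ preserves cardinality inside $w_i$ if and only if $t_i \not\in R$. Treating $Z$ exactly as $Q$ above (using that $Z$ likewise starts synchronized with value $0$, an assumption implicit in the overall setup, where $Z$ is a freshly reset auxiliary counter), the value of $Z$ at the end of the loop equals the number of indices $i$ at which the branch $\icz^T \Linc^Z$ was taken. The check $\ival{n/2}^Z$ forces this number to be exactly $n/2$, and each taken branch forces $t_i \not\in R$. Since $\abs{R \intersect T} = n/2$, the set $\os{t_1, \dots, t_{n-1}}$ contains $n/2$ zeros if $t_0 \in R$ and only $n/2 - 1$ zeros otherwise, so such a word $u$ exists if and only if $t_0 \in R$.

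For uniqueness, the choice of branch in $w_i$ is forced either way: $\icz^T \Linc^Z$ must be selected exactly when $t_i \not\in R$ (otherwise the count of $\Linc^Z$ applications cannot reach $n/2$) and excluded when $t_i \in R$ (otherwise $\icz^T$ drops cardinality). The closing $\Lreset^Q$ and $\Lreset^Z$ then restore both counters to synchronized zero by Lemma~\ref{lem:reset}, $P$ is never touched, and $T$ has undergone $n$ full rotations, so $R \cdot u = R$. I expect the main obstacle to be the combined bookkeeping of validity (so that Lemmas~\ref{lem:inc} and~\ref{lem:reset} apply at each invocation) together with the tape-head position across many loop iterations; the counter lemmas, however, do the heavy lifting on the counter side, leaving only the tape-rotation induction and the careful enumeration of the forced branches.
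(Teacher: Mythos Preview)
Your proposal is correct and follows essentially the same approach as the paper's proof: parse $u$ along the regular expression, use Lemma~\ref{lem:inc} to force exactly $n-1$ loop iterations (hence $n$ rotations of $T$), observe that the $i$-th $\icz^T$ tests the original bit $t_i$, and then compare the number of available zero positions among $t_1,\dots,t_{n-1}$ against the required count $n/2$ enforced by $\ival{n/2}^Z$. The paper phrases the same counting via indicator variables $m_i$ and $k_i$ with the inequality $n/2 = \sum k_i \le \sum m_i = n/2 - m_0$, but the content is identical; your observation that the hypothesis should really concern $Q$ and $Z$ rather than $P$ and $Q$ is also apt, and the paper's own proof tacitly uses the same reading.
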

\begin{proof}
  For $i \in \os{0, 1, \dots, n-1}$, let $m_i = 1$ if $t_i \not\in R$ and let
  $m_i = 0$ otherwise.

  By Lemma~\ref{lem:inc}, the $\ival{n-1}^Q$ instruction in a word $w \in \Lco$
  preserves the number of elements in a valid configuration if and only if $w$
  contains exactly $n-1$ occurrences of $\Linc^Q$. Therefore, each word that
  preserves the number of elements when applied to $R$ contains the instruction
  $\irotr^N$ exactly $n$ times.
  Since each occurrence of $\Linc^Z$ is paired with a $\icz^T$ instruction,
  $\Linc^Z$ is applied at most $m_i$ times after the $i$-th rotation,
  \ie{}every program that does not reduce the number of elements when applied
  to $R$ has the form
  \begin{equation*}
    \irotr^T \prod_{i=1}^{n-1} ((\icz^T \Linc^Z)^{k_i} \irotr^T \Linc^Q) \ival{n-1}^Q \ival{n/2}^Z \Lreset^Q \Lreset^Z
  \end{equation*}
  for some $k_i \in \os{0, 1}$ with $k_i \le m_i$. Moreover, the $\ival{n/2}^Z$
  instruction preserves the cardinality of the configuration if and only if the
  sum of all $k_i$ with $1 \le i \le n-1$ equals $n/2$.
  Therefore, any choice of values $k_i$ must also satisfy
  \begin{equation*}
    n/2 = \sum_{i = 1}^{n-1} k_i \le \sum_{i = 1}^{n-1} m_i = n/2 - m_0
  \end{equation*}
  where the last equality follows from the assumption that $\abs{R \intersect
  T} = n/2$.
  This is only possible if $m_0 = 0$, \ie{}$t_0 \in R$, and $k_i = m_i$ for all
  $i \in \os{1, 2, \dots, n-1}$. By letting $k_i = m_i$ in the program above,
  we obtain the unique word $u$ such that $\abs{R \cdot u} = \abs{R}$. To see
  that $R \cdot u = R$, note that after $n$ cyclic rotations, the tape $T$
  returns to its original state. Moreover, by Lemma~\ref{lem:reset}, both $Q$
  and $Z$ are synchronized and have value zero.
\end{proof}

We also let $\Lrotl = \Ldec^P \irotl^T$ and $\Lrotr = \Linc^P \irotr^T$. The
language $L$ is now defined as $L = \Lreset^P \Lreset^Q \Lreset^Z (\Lco
\Lrotr)^* \icz^T \Lrotl (\imvl^T (L_1 \mid L_2 \mid L_3))^* \ival{n-1}^P$
with
\begin{align*}
  L_1 & = (\ival{0}^P \mid \Lrotl \icz^T \Lrotr) \Lrotr (\Lco \Lrotr)^* \icz^T \Lrotl, \\
  L_2 & = (\Lrotl \Lco)^+ \ival{0}^P (\Lco \Lrotr)^+ \icz^T \Lrotl, \\
  L_3 & = (\Lrotl \Lco)^+ \Lrotl \icz^T \Lrotr \Lrotr (K_1 \mid K_2 K_3^* K_4), \\
  K_1 & = \icz^T \Lrotl (\imvr^T \Lrotl)^* \ival{0}^P, \\
  K_2 & = \Lco \Lrotl (\imvr^T \Lrotl)^* \ival{0}^P \Lrotr (\icz^T \Lrotr)^* \Lco \Lrotr, \\
  K_3 & = \Lco \Lrotl (\imvr^T \Lrotl)^* \Lrotl \Lco \Lrotr \Lrotr (\icz^T \Lrotr)^* \Lco \Lrotr, \\
  K_4 & = \icz^T \Lrotl (\imvr^T \Lrotl)^* \Lrotl \Lco \Lrotr.
\end{align*}

\noindent
The following lemma is the technical main ingredient for
Theorem~\ref{thm:lower-const}.
\begin{lemma}\label{lem:long-word}
  There exists a valid initial configuration $R$ such that $L$ is deterministic
  on $R$. Moreover, there exists a word $u \in L$ of length at least
  $\binom{n}{n/2}$ such that $\abs{R \cdot u} = \abs{R}$.
\end{lemma}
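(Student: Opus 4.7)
The plan is to exhibit an explicit valid configuration $R$ together with a word $u \in L$ such that $u$ simulates, on the tape $T$, the successor enumeration of Lemma~\ref{lem:sequence} for $m = n/2$, with each iteration of the outer Kleene star $(\imvl^T(L_1\mid L_2\mid L_3))^*$ corresponding to one successor step $X_{k-1}\to X_k$. I will take $R$ to be the configuration with $R\intersect T=\os{t_0,\ldots,t_{n/2-1}}$ (so that $T$ encodes $X_0=0^{n/2}1^{n/2}$) and with each of the three counters $P$, $Q$, $Z$ synchronized and of value zero. Since every iteration of the main loop contributes at least the one instruction $\imvl^T$ to $u$, and there are on the order of $\binom{n}{n/2}$ successor steps required to reach $X_\ell = 1^{n/2}0^{n/2}$, this yields the required length bound.

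To describe $u$, observe first that the prefix $\Lreset^P\Lreset^Q\Lreset^Z$ acts as the identity on $R$ by Lemma~\ref{lem:reset}. The segment $(\Lco\Lrotr)^*\icz^T\Lrotl$, in view of Lemma~\ref{lem:co}, scans the tape rightward through the initial block of $1$-bits, incrementing $P$ each time, and halts with the head positioned over the topmost $1$-bit of $X_0$. In each iteration of the main loop the leading $\imvl^T$ flips a local $01$-pattern into $10$, which is the essential local operation in a successor step, while one of $L_1$, $L_2$, $L_3$ performs the block rearrangement that returns the displaced $1$-bits to the low end of the tape and repositions the head over the next applicable $01$-boundary in $X_k$. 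The terminating factor $\ival{n-1}^P$ fires exactly once $P$ has reached its designed end value after all successor steps have been carried out.

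Determinism of $L$ on $R$ is then established by structural induction on the syntax of $L$. The macros $\Lreset^N$, $\Linc^N$, $\Ldec^N$ are deterministic on valid configurations by Lemma~\ref{lem:deterministic}, and $\Lco$ is deterministic by Lemma~\ref{lem:co}; Lemma~\ref{lem:valid} ensures that intermediate configurations remain valid, so these hypotheses are preserved inductively. The only remaining choice points are the alternations $L_1\mid L_2\mid L_3$ and $K_1\mid K_2K_3^*K_4$ inside $L_3$. For each pair of distinct branches, the distinguishing leading factor imposes incompatible conditions on the current configuration: for instance, $\ival{0}^P$ against $\Lrotl\Lco$ splits on whether $P=0$, while $\icz^T$ against $\Lco$ splits on the bit under the head. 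Hence at most one branch preserves cardinality.

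The main obstacle is to verify, for every bit string $X_k$ appearing in the enumeration, that exactly one of $L_1$, $L_2$, $L_3$ (and, in the $L_3$ case, a determined choice of $K_1$ or $K_2K_3^*K_4$ together with a determined number of $K_3$-iterations) both preserves cardinality and realises the transition $X_k \to X_{k+1}$. This reduces to a careful case analysis on the shape of the trailing blocks of $0$s and $1$s in $X_k$ relative to the head position encoded by $P$, together with bookkeeping of how many $\Lrotl$ and $\Lrotr$ instructions each branch performs so that $P$ emerges from the iteration with the value demanded by the next successor step. Once this accounting is complete, Lemma~\ref{lem:sequence} supplies the total number of main-loop iterations, and combined with the $\imvl^T$ contribution per iteration this yields the claimed length bound.
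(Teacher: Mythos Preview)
Your plan is essentially the paper's own approach: the same initial configuration, the same reading of each $\imvl^T(L_1\mid L_2\mid L_3)$-iteration as one successor step of Lemma~\ref{lem:sequence}, and the same reduction of determinism to a case analysis on the branching points left over after Lemmas~\ref{lem:deterministic} and~\ref{lem:co} dispose of the macro languages.

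There is, however, one branching point your sketch underplays and which your general principle (``the distinguishing leading factor imposes incompatible conditions on the current configuration'') does \emph{not} cover: the exit from the outer loop, i.e.\ the choice between $\ival{n-1}^P$ and $\imvl^T(L_1\mid L_2\mid L_3)$. When the tape encodes $X_\ell = 1^{n/2}0^{n/2}$ and $P$ has value $n-1$, the instruction $\imvl^T$ does \emph{not} fail immediately; the head cell $t_0$ holds a $1$ and $t_1$ holds a $0$, so $\imvl^T$ preserves cardinality. To rule out a further loop iteration you must trace into $L_1\mid L_2\mid L_3$ and use the ring-buffer wrap-around: after $\imvl^T$ the encoding becomes $01^{n/2-1}0^{n/2-1}1$, and the paper argues that the mandatory $\ival{0}^P$ in $K_2$ forces a sweep all the way to position $0$, where the stray wrapped $1$-bit gets overwritten by $\imvr^T$, reducing cardinality. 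This is the only place where determinism is not a local incompatibility at the first differing instruction but a downstream failure, and it is precisely why $K_2$ is distinguished from $K_3$ in the grammar. Make this case explicit; the rest of your outline is sound.
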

\begin{proof}
  Let us first show that $L$ is deterministic on all configurations $R$ which
  are valid and satisfy $\abs{R \intersect T} = n/2$. Since every word in $L$
  starts with a program from $\Lreset^P \Lreset^Q \Lreset^Z$, we may also
  assume without loss of generality that each of the three counters is
  synchronized and has value zero under $R$.

  Suppose there are two different words $u_1, u_2 \in L$ and a valid
  configuration $R$ such that $\abs{R \cdot u_1} = \abs{R}$. We will show that
  then, $\abs{R \cdot u_2} < \abs{R}$.
  Since $L$ is prefix-free, there exist unique words $p, q_1, q_2 \in I^*$ and
  different instructions $\iota_1, \iota_2 \in I$ such that $u_1 = p \iota_1
  q_1$ and $u_2 = p \iota_2 q_2$.
  By Lemma~\ref{lem:deterministic} and Lemma~\ref{lem:co}, we already know that
  $\iota_1$ and $\iota_2$ do not correspond to a factor belonging to any of the
  languages $\Lreset^P$, $\Lreset^Q$, $\Lreset^Z$, $\Lrotl$, $\Lrotr$ or
  $\Lco$.
  The remaining cases are:
  \begin{enumerate}
    \item $\iota_1 = \icz^T$ and $\iota_2 q_2 \in \Lco I^*$ (or vice versa),
    \item $\iota_1 = \imvr^T$ and $\iota_2 q_2 \in \Lrotl \Lco I^*$ (or vice versa),
    \item $\iota_1 = \ival{0}^P$ and $\iota_2 q_2 \in \Lrotl I^*$ (or vice versa),
    \item $\iota_1 = \ival{0}^P$ and $\iota_2 q_2 \in \imvr^T \Lrotl I^*$ (or vice versa),
    \item $\iota_1 = \ival{n-1}^P$ and $\iota_2 q_2 \in \imvl^T (L_1 \mid L_2 \mid L_3) I^*$ (or vice versa).
  \end{enumerate}

  In the first case, since $\abs{R \cdot p \icz^T} = \abs{R \cdot p}$, we have
  $t_0 \not\in R \cdot p$.
  As $T$ is a sub-machine of $U$, we have $\abs{R \cdot p \intersect T} =
  \abs{R \intersect T} = n/2$.
  It is an invariant that $P$ and $Q$ are always synchronized and have value
  zero before and after applying a factor from $\Lco$.
  Therefore, we know by Lemma~\ref{lem:co} that $\abs{R \cdot u_2} < \abs{R
  \cdot p}$.
  In the second case, observe that $\abs{R \cdot p \imvr^T} = \abs{R \cdot p}$
  implies $t_{n-1} \not\in R \cdot p$ and after applying the prefix $r$ of
  $\iota_2 q_2$ corresponding to $\Lrotl$, we have $t_0 \not\in R \cdot pr$.
  This implies $\abs{R \cdot u_2} < \abs{R \cdot pr}$ as in the first case.

  In the third case, by $\abs{R \cdot p \ival{0}^P} = \abs{R \cdot p}$, we
  know that the value of $P$ under $R \cdot p$ is zero. Since $\iota_2 q_2 \in
  \Lrotl I^* = \Ldec^P \irotl^T I^*$, we conclude that $\abs{R \cdot u_2} <
  \abs{R \cdot p}$ by Lemma~\ref{lem:dec}.
  The fourth case is analogous to the third case and the last case is covered
  later.

  We now describe how to construct a word $u$ of the given length such that
  $\abs{R \cdot u} = \abs{R}$.
  At any time, the value of the counter $P$ describes the position of the tape
  head, \ie{}the difference between the number of right and left rotations
  performed since the beginning of the computation.
  By construction, the tape head of each tape always stays in the same place
  and the tape content is rotated or modified. However, it is often convenient
  to think of the tape head moving on a stationary tape instead. This idea is
  captured in the following definition.
  We say that a configuration $R$ \emph{encodes} a word $b_{n-1} b_{n-2} \cdots
  b_0$ with $b_i \in \os{0, 1}$ if the value of $P$ under $R$ is $v$ and for
  each $i \in \os{0, 1, \dots, n-1}$, we have $b_{i+v \bmod n} = 1$ if and only
  if $t_i \in R$.

  The initial configuration is the unique valid configuration encoding the word
  $0^{n/2} 1^{n/2}$. Then, the idea is that if some valid configuration $R'$,
  which satisfies a series of invariants described below, encodes a word $X \in
  \os{0,1}^*$, applying a word from $\imvl^T (L_1 \mid L_2 \mid L_3)$ to $R'$
  results in a configuration encoding the successor of $X$. This process can be
  repeated until we arrive at a configuration encoding $1^{n/2} 0^{n/2}$.
  Moreover, before and after applying a word from $\imvl^T (L_1 \mid L_2 \mid
  L_3)$, the tape head on $T$ always points at the leftmost bit of the
  rightmost $1$-block.
  Lemma~\ref{lem:sequence} yields the desired lower bound for the length of
  the sequence of words that corresponds to the iterated process of going from
  one encoding to its succesor.

  Let us now verify that for each configuration $R'$ corresponding to an
  encoding $X_{k-1}$ for some $k \ge 1$, there exists a word $\imvl^T w$ with
  $w \in L_1 \union L_2 \union L_3$ such that $\abs{R' \cdot \imvl^T w} =
  \abs{R'}$ and the configuration $R' \cdot \imvl^T w$ encodes $X_k$.
  By the invariant that the tape head of $T$ points to the leftmost bit of the
  rightmost $1$-block, the instruction $\imvl^T$ moves the leftmost bit of the
  rightmost $1$-block to the left, thereby replacing the encoding $p 0 1^i 0^j$
  by $p 1 0 1^{i-1} 0^j$ while the tape head stays in the same position, now
  pointing at a $0$.
  The program $w$ now needs to move the remaining $1$-block of length $i-1$ to
  the right (if applicable) and restore the invariant of the tape head pointing
  at the leftmost bit of the rightmost $1$-block.

  If $i = 1$, we apply a word from $L_1$. In that case, the new encoding
  already is $p 1 0^{j+1}$ as desired and the instructions of $w$ move the tape
  head to the left, skipping the first block of $1$-bits, moving on to the
  first $0$-bit left to the $1$-block and then returning to the leftmost
  $1$-bit of the rightmost $1$-block.
  The action is illustrated in Figure~\ref{fig:L1} for an encoding with $i = 1$
  and $j = 2$. In this case, the word preserving the cardinality of the
  configuration is
  \begin{equation*}
    w \in \Lrotl \icz^T \Lrotr \Lrotr \Lco \Lrotr \Lco \Lrotr \Lco \Lrotr \icz^T \Lrotl \subseteq L_1.
  \end{equation*}
  Note that for each configuration, only the corresponding encoding, \ie{}the
  restriction of the configuration to $T$ relative to the tape head, is
  depicted.

  \tikzstyle{tape}=[draw,minimum size=4.5mm]
  \newcommand{\extape}[3]{%
    \begin{scope}[yshift={-#1*8mm},start chain=1 going right,node distance=-0.15mm]
      \foreach \n [count=\ni] in {#2} {%
        \ifthenelse{\equal{\n}{H}}{%
          \node [on chain=1,tape,fill=gray!20,outer sep=0] (h) {$1$};
        }{%
          \ifthenelse{\equal{\n}{h}}{%
            \node [on chain=1,tape,fill=gray!20,outer sep=0] (h) {$0$};
          }{%
            \node [on chain=1,tape,outer sep=0] (\ni) {$\n$};
          }
        }
      }
      \node [on chain=1,xshift=2mm] {#3};
      \draw (1.north west) -- ++(-.2,0) decorate [decoration={zigzag, segment length=.12cm, amplitude=.02cm}] {-- ($(1.south west)+(-.2,0)$)} -- (1.south west) -- cycle;
      \node [tape,left of=1,xshift=-8mm,draw=none] {$\cdots$};
      \draw[-latex,thick] ([yshift=-2.5mm]h.south) -- ([yshift=-0mm]h.south);
    \end{scope}
  }

  \begin{figure}[t]
    \centering
    \begin{minipage}{0.49\textwidth}
      \centering
      \begin{tikzpicture}
        \extape{0}{0,1,1,0,H,0,0}{$R$}
        \extape{1}{0,1,1,1,h,0,0}{$R \cdot \imvl^T$}
        \extape{2}{0,H,1,1,0,0,0}{$R \cdot \imvl^T w$}
      \end{tikzpicture}
      \caption{Action of a word $w \in L_1$}
      \label{fig:L1}
    \end{minipage}
    \hspace{\fill}
    \begin{minipage}{0.49\textwidth}
      \begin{tikzpicture}
        \extape{0}{0,1,1,0,H,1,1}{$R$}
        \extape{1}{0,1,1,1,h,1,1}{$R \cdot \imvl^T$}
        \extape{2}{0,1,1,1,0,H,1}{$R \cdot \imvl^T w$}
      \end{tikzpicture}
      \caption{Action of a word $w \in L_2$}
      \label{fig:L2}
    \end{minipage}
  \end{figure}

  If $i > 1$ and $j = 0$, we apply a word from $L_2$. In that case, the new
  encoding already is $p 1 0 1^{i-1}$ as desired and $w$ also only moves the
  tape head back to the right position using rotation instructions similar to
  those in the case $i = 1$.
  The action is illustrated in Figure~\ref{fig:L2} for an encoding with $i = 3$
  and $j = 0$; the word is
  \begin{equation*}
    w \in \Lrotl \Lco \Lrotl \Lco \ival{0}^P \Lco \Lrotr \Lco \Lrotr \icz^T \Lrotl \subseteq L_2.
  \end{equation*}
  Again, only the encoding corresponding to each configuration is depicted.

  The remaining case is $i > 1$ and $j \ge 1$ which means that the $1$-block to
  the right of the tape head must be moved. If $j = 1$, this can be
  accomplished by applying a word from $L_3$ which ends with a program from
  $K_1$. If $j \ge 2$, one can choose a word that ends with a program from $K_2
  K_3^{j-2} K_4$.
  In the latter case, the program corresponding to $K_2$ moves the rightmost
  bit of the $1$-block to position $0$, then each program corresponding to
  $K_3$ moves one of the middle bits and the last bit is moved by a program in
  $K_4$.
  Each of the programs corresponding to $K_2$ or to $K_3$ verify that the bit
  moved to the right is not the last bit before starting the process. The
  program corresponding to $K_4$ checks that the bit moved to the right is the
  last bit by verifying that the left-hand cell contains a $0$.
  Another difference between $K_2$/$K_3$ and $K_4$ is that $K_2$/$K_3$ move the
  tape head back to the left to fetch the next bit while $K_4$ leaves the
  pointer on the last moved bit which becomes the new leftmost bit of the
  rightmost $1$-block.
  The operation of a word $w \in L_3$ is illustrated in Figure~\ref{fig:L3} for
  an encoding with $i = 5$ and $j = 2$. For better understanding, the word $w$
  is factorized as $w = w_1 w_2 w_3 w_4 w_5$ with
  \begin{align*}
    w_1 & \in (\Lrotl \Lco)^4 \Lrotl \icz^T \Lrotr \Lrotr \subseteq (\Lrotl \Lco)^+ \Lrotl \icz^T \Lrotr \Lrotr, \\
    w_2 & \in \Lco \Lrotl \imvr^T \Lrotl \imvr^T \Lrotl \ival{0}^P \Lrotr \icz^T \Lrotr \icz^T \Lrotr \Lco \Lrotr \subseteq K_2, \\
    w_3, w_4 & \in \Lco \Lrotl \imvr^T \Lrotl \imvr^T \Lrotl \Lrotl \Lco \Lrotr \Lrotr \icz^T \Lrotr \icz^T \Lrotr \Lco \Lrotr \subseteq K_3, \\
    w_5 & \in \icz^T \Lrotl \imvr^T \Lrotl \imvr^T \Lrotl \Lrotl \Lco \Lrotr \subseteq K_4
  \end{align*}
  and the intermediate results after applying each of the factors are depicted.

  \begin{figure}[t]
    \centering
    \begin{tikzpicture}
      \extape{0}{0,0,1,1,0,H,1,1,1,1,0,0}{$R$}
      \extape{1}{0,0,1,1,1,h,1,1,1,1,0,0}{$R \cdot \imvl^T$}
      \extape{2}{0,0,1,1,1,0,1,1,H,1,0,0}{$R \cdot \imvl^T w_1$}
      \extape{3}{0,0,1,1,1,0,1,H,1,0,0,1}{$R \cdot \imvl^T w_1 w_2$}
      \extape{4}{0,0,1,1,1,0,H,1,0,0,1,1}{$R \cdot \imvl^T w_1 w_2 w_3$}
      \extape{5}{0,0,1,1,1,h,1,0,0,1,1,1}{$R \cdot \imvl^T w_1 w_2 w_3 w_4$}
      \extape{6}{0,0,1,1,1,0,0,0,H,1,1,1}{$R \cdot \imvl^T w_1 w_2 w_3 w_4 w_5$}
    \end{tikzpicture}
    \caption{Action of a word $w \in L_3$}
    \label{fig:L3}
  \end{figure}
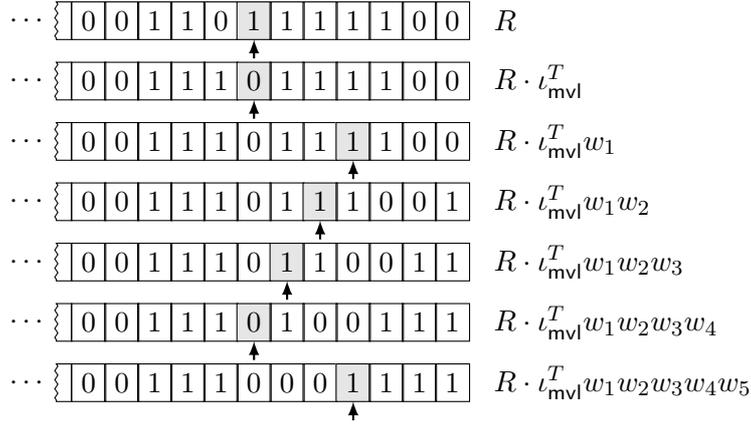

  The distinction between $K_2$ and $K_3$ is needed because we need to make
  sure that whenever moving a $1$-block, the rightmost bit eventually reaches
  position $0$ on the tape. This assertion ensures we do not return to a
  previous encoding from $1^{n/2} 0^{n/2}$: after arriving at $1^{n/2}
  0^{n/2}$, the tape head of $T$ is moved to position $n-1$. Assume we apply a
  word from $\imvl^T (L_1 \mid L_2 \mid L_3)^*$ to this configuration. Then,
  since $T$ can be thought of as a ring buffer, the encoding is replaced with
  $0 1^{n/2-1} 0^{n/2-1} 1$ and the subsequent program from $K_2$ will
  overwrite the rightmost $1$-bit, resulting in a reduction of the
  configuration size. More generally, this argument holds for any
  configuration, which is reachable from $R$ and under which the value of $P$
  is $n-1$, thereby concluding the analysis of the last case in the case
  distinction above to show that $L$ is deterministic.
\end{proof}

The last missing piece is a component that imposes the language $L$ on the
labels of valid computations.
To this end, let $\autA = (Q, I, \delta, q_0, F)$ be the minimal deterministic
automaton of $L$. We remove the sink state from $Q$ and let all transitions
leading to that state be undefined instead. Then, as long as there exists a
state which has two ingoing transitions labeled by the same letter, we create a
copy of the state and redirect one of the transitions to the copy. When
interpreting the letters of $I$ as actions on $Q$, the tuple $(Q, I)$ then
forms a token machine which we call \emph{control unit}. By construction, all
instructions are injective.

Putting the pieces together leads to the following theorem.
\begin{theorem}\label{thm:lower-const}
  For all $n \in \mathbb{N}$, there exists a token machine with $n + 9
  \ceil{\log n} + \bigO{1}$ cells and $32$ instructions which admits a maximal
  progressing computation of length at least $\binom{n}{\floor{n/2}}$.
\end{theorem}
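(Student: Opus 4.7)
The plan is to form the desired token machine $M$ as the union of $U$ and the control unit just constructed. Counting cells: $T$ contributes $n$, each of $P$, $Q$, $Z$ contributes $3\lceil\log n\rceil$ for its three internal tapes, and the control unit contributes $\bigO{1}$ because $L$ is defined by a regular expression of constant syntactic size over the fixed alphabet $I$; its minimal DFA therefore has bounded size, and the state duplication used to make each action injective only multiplies this by the letter in-degree, which is still bounded. Counting instructions: these are exactly the elements of $I$, namely the tape operations of $T$, the eight standard counter instructions for each of $P$, $Q$, $Z$, and the fixed $\ival{k}$ operations used in $L$; a careful tally yields $32$.

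For the computation, I would take $R_0 = R\cup\os{q_0}$ where $R$ is the valid initial configuration of Lemma~\ref{lem:long-word} and $q_0$ is the initial state of the control unit, and then run the word $u=\iota_1\cdots\iota_\ell\in L$ of length $\ell\geq\binom{n}{\floor{n/2}}$ furnished by the same lemma. Cardinality preservation on the $U$-side is exactly the conclusion $\abs{R\cdot u}=\abs{R}$; on the control unit side, $u\in L$ traces a path in the DFA from $q_0$ to the unique accepting state, so the control-unit cell survives throughout. Hence the run $R_0 \xrightarrow{\iota_1} R_1 \xrightarrow{\iota_2} \cdots \xrightarrow{\iota_\ell} R_\ell$ is a valid computation in $M$ of length at least $\binom{n}{\floor{n/2}}$.

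It remains to verify that this computation is progressing and maximal. For progression, if some $\iota\neq\iota_i$ satisfied $\abs{R_{i-1}\cdot\iota}=\abs{R_{i-1}}$, then on the control unit side $\iota_1\cdots\iota_{i-1}\iota$ would still be a prefix of some word in $L$, yet the local fork analysis from Lemma~\ref{lem:long-word} (together with Lemmas~\ref{lem:deterministic} and \ref{lem:co}) rules out any branch other than $\iota_i$ at the current configuration. Distinctness of the $R_i$ follows from the same local uniqueness: an equality $R_i = R_j$ with $i<j$ would make $\iota_{j+1}\cdots\iota_\ell$ a second preserving continuation starting at position $i$. Maximality at $R_\ell$ is immediate because the outermost structure of $L$ terminates with $\ival{n-1}^P$, so the accepting DFA state has no outgoing non-sink transition and every instruction becomes undefined on the control-unit cell in $R_\ell$.

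The main obstacle is lifting the \emph{global} determinism of $L$ asserted by Lemma~\ref{lem:long-word} (uniqueness of the preserving word in $L$) to the \emph{pointwise local} determinism required by the progression condition (uniqueness of the next preserving instruction at every prefix). Fortunately, the case analysis inside the proof of Lemma~\ref{lem:long-word} already proceeds prefix-by-prefix along the structure of $L$, so the upgrade essentially amounts to re-organizing that analysis; the most delicate bookkeeping occurs inside the nested alternative $K_1\mid K_2 K_3^* K_4$ within $L_3$, where four macro-branches compete at the same intermediate configuration. Once this is dispatched, the construction directly yields the desired maximal progressing computation and no further invocation of Proposition~\ref{prop:tokens} is required to prove the theorem as stated.
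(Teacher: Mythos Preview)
Your plan is essentially the paper's own proof: form $V$ as the union of $U$ with the control unit, start from $R\cup\{q_0\}$, run the word $u$ supplied by Lemma~\ref{lem:long-word}, and count cells and instructions exactly as the paper does. The paper's argument for ``maximal progressing'' is a single sentence (non-prefixes of $L$ kill the control-unit token, hence the computation from Lemma~\ref{lem:long-word} lifts to $V$); your write-up expands this considerably and correctly isolates the global-to-local determinism lift as the only nontrivial point.

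One caution on your last paragraph. You assert that the prefix-by-prefix case analysis in Lemma~\ref{lem:long-word} already yields \emph{instruction-level} uniqueness of the preserving successor, so that ``re-organizing'' suffices. That is a bit optimistic as stated: at choice points such as $(\Lco\Lrotr)^*\,\icz^T$ (and the analogous $\icz^T$/$\Lco$ forks inside $L_1,L_2,L_3,K_i$), when the head sits on a $0$-bit both $\icz^T$ and the first instruction of $\Lco$---namely the bijection $\irotr^T$---preserve cardinality on $U$ and are simultaneously valid DFA moves; the $\Lco$ branch is only killed \emph{later}, at $\ival{n/2}^Z$, as in Lemma~\ref{lem:co}. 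So the literal progression condition (unique preserving \emph{instruction} at every step) does not drop out of Lemma~\ref{lem:long-word} by mere re-organization; what that lemma gives you is uniqueness at the level of macro-blocks. The paper's own proof does not spell this out either and operates at the same level of detail you do, so your proposal matches the paper's argument; just be aware that the sentence ``the upgrade essentially amounts to re-organizing that analysis'' promises more than the cited lemmas immediately deliver.
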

\begin{proof}
  It suffices to prove the theorem for even numbers $n$.
  Let $V$ be the union of $U$ and the control unit.
  Any word, which is not a prefix of a word in $L$, empties the configuration
  when applied to the initial configuration $\os{q_0}$ in the control unit.
  Thus, by taking the union of the initial configuration from
  Lemma~\ref{lem:long-word} and $\os{q_0}$, we obtain a maximal progressing
  computation of the desired length in $V$.

  The only instructions required in the construction are $\irotl^T$,
  $\irotr^T$, $\imvl^T$, $\imvr^T$, $\icz^T$, $\ival{0}^P$, $\ival{n-1}^Q$,
  $\ival{n/2}^Z$ and eight additional instructions for each of the three binary
  counters.
  Since $L$ is a fixed language, the control unit has $c$ cells for a constant
  $c \in \mathbb{N}$ (independent of $n$), and $U$ has $n + 9\ceil{\log n}$
  cells: $n$ cells for the tape $T$ and $\ceil{\log n}$ cells for each of the
  three tapes of the three binary counters. Therefore, the number of cells of
  $V$ is $n + 9 \ceil{\log n} + c$.
\end{proof}

\begin{corollary}\label{cor:main}
  There exists a sequence of transformation semigroups $(T_n)_{n \in
  \mathbb{N}}$ with a fixed number of generators such that $T_n$ has $n$ states
  and the $\gR$-height (resp.\ $\gL$-height, $\gJ$-height) of $T_n$ is in
  $\Omega(2^n / n^{9.5})$.
\end{corollary}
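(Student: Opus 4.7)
The plan is to lift the token machine lower bound of Theorem~\ref{thm:lower-const} through three structural propositions from the preliminaries: Proposition~\ref{prop:tokens} converts a maximal progressing computation into an $\gR$-chain, Proposition~\ref{prop:partial} replaces a partial transformation semigroup by a total one at the cost of one extra state, and Proposition~\ref{prop:opposite} supplies the dual construction needed for the $\gL$-case.

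Given $n$, I pick $m$ as large as possible subject to $m + 9\ceil{\log m} + c + 1 \le n$, where $c$ is the constant hidden in the $\bigO{1}$ of Theorem~\ref{thm:lower-const}. For sufficiently large $n$ this gives $m \ge n - 9\ceil{\log n} - \bigO{1}$. Theorem~\ref{thm:lower-const} then supplies a token machine $(C_m, I_m)$ with $\abs{C_m} \le n - 1$ cells, $\abs{I_m} = 32$ instructions, and a maximal progressing computation of length at least $\binom{m}{\floor{m/2}}$. Proposition~\ref{prop:tokens} yields a partial transformation semigroup $P_m$ on $C_m$ whose $\gR$-height is at least $\binom{m}{\floor{m/2}}$, and Proposition~\ref{prop:partial} produces an isomorphic transformation semigroup $T_n$ on $\abs{C_m} + 1 \le n$ states (padded with fixed-point states if one needs exactly $n$). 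Since the $\gR$-height is an isomorphism invariant, $T_n$ inherits this bound. Combining Stirling's estimate $\binom{m}{\floor{m/2}} = \Theta(2^m / \sqrt{m})$ with $m \ge n - 9\ceil{\log n} - \bigO{1}$ then gives the desired $\Omega(2^n / n^{9.5})$ lower bound for the $\gR$-height of $T_n$.

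The $\gJ$-bound follows for free from the same sequence $T_n$: as recalled in the preliminaries, $s \Rl t$ implies $s \Jl t$, so every $\gR$-chain is also a $\gJ$-chain and the $\gJ$-height inherits the same lower bound. For the $\gL$-case, I would first verify that every instruction used in the construction of Theorem~\ref{thm:lower-const} is an injective partial transformation: the tape operations $\irotl^T$, $\irotr^T$, $\icz^T$, $\isync^T$, $\imvl^T$, $\imvr^T$ and the eight counter instructions including $\iinc^N$, $\idec^N$, $\ival{k}^N$ all have pairwise distinct images on their domains, and the control unit instructions are injective by construction. Hence Proposition~\ref{prop:opposite} provides a partial transformation semigroup on $C_m$ isomorphic to the opposite of $P_m$; passing to the opposite swaps $\gR$ and $\gL$, so this semigroup has $\gL$-height at least $\binom{m}{\floor{m/2}}$. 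A final application of Proposition~\ref{prop:partial} yields a transformation semigroup on at most $n$ states realising the same lower bound for the $\gL$-height.

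The main obstacle is not structural but arithmetic: one must track the parameters accurately enough to land on the exponent $9.5$. The factor $n^9$ comes from the nine $\ceil{\log m}$-bit tapes in $U$ (three tapes $S$, $T$, $\ov T$ per binary counter, and three counters $P$, $Q$, $Z$), while the remaining half-exponent is the Stirling factor $\sqrt{m}$. The additive constants hidden in the $\bigO{1}$ term of Theorem~\ref{thm:lower-const}, together with the single extra state introduced by Proposition~\ref{prop:partial}, contribute only a constant factor and therefore do not affect the exponent.
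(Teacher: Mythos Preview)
Your proposal is correct and follows essentially the same route as the paper: invoke Theorem~\ref{thm:lower-const}, then Propositions~\ref{prop:tokens} and~\ref{prop:partial} for the $\gR$-height, the implication $\Rl \Rightarrow \Jl$ for the $\gJ$-height, Proposition~\ref{prop:opposite} together with injectivity of all instructions for the $\gL$-height, and Stirling plus the $9\ceil{\log n}$ overhead for the asymptotics. You add some welcome bookkeeping (the explicit choice of $m$, the extra state from Proposition~\ref{prop:partial}, padding to exactly $n$ states, and the instruction-by-instruction injectivity check) that the paper leaves implicit, but the argument is the same.
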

\begin{proof}
  For the $\gR$-height, the result is an immediate consequence of
  Theorem~\ref{thm:lower-const}, Proposition~\ref{prop:tokens} and
  Proposition~\ref{prop:partial}.
  The statement also holds for $\gJ$-height because every $\gR$-chain also is a
  $\gJ$-chain; see \eg{}\cite[Proposition~1.4]{pin86}.
  An equivalent statement for the $\gL$-height follows from
  Proposition~\ref{prop:opposite} and the fact that all instructions used in
  the construction are injective.
  By Stirling's formula, we have $\binom{n}{n/2} \in \Omega(2^n / n^{0.5})$;
  see~\cite{Robbins55,Feller1968:IntroProbabilityTheory}. Thus, we obtain the
  desired bound.
  Note that the bound in Theorem~\ref{thm:lower-const} is for $n + 9
  \ceil{\log n} + \bigO{1}$ cells and not just $n$ cells.
  This
  yields the factor $n^{9}$ in the denominator.
\end{proof}

We can now prove our second main result.

\begin{proof}[Proof of Theorem~\ref{thm:main}]
  In view of Proposition~\ref{prop:minimal} and
  Proposition~\ref{prop:isolated}, the theorem immediately follows from
  Corollary~\ref{cor:main}.
\end{proof}

% %%%%%%%%%%%%%%%%%%%%%%%%%%%%%%%%%%%%%%%%%%%%%%%%%%%%%%%%%%%%%%%%%%%%%%%%%%%%%
%                                 Bibliography
% %%%%%%%%%%%%%%%%%%%%%%%%%%%%%%%%%%%%%%%%%%%%%%%%%%%%%%%%%%%%%%%%%%%%%%%%%%%%%

\newcommand{\Ju}{Ju}\newcommand{\Ph}{Ph}\newcommand{\Th}{Th}\newcommand{\Ch}{Ch}\newcommand{\Yu}{Yu}\newcommand{\Zh}{Zh}\newcommand{\St}{St}\newcommand{\curlybraces}[1]{\{#1\}}

% %%%%%%%%%%%%%%%%%%%%%%%%%%%%%%%%%%%%%%%%%%%%%%%%%%%%%%%%%%%%%%%%%%%%%%%%%%%%%
% %%%%%%%%%%%%%%%%%%%%%%%%%%%%%%%%%%%%%%%%%%%%%%%%%%%%%%%%%%%%%%%%%%%%%%%%%%%%%
% %%%%%%%%%%%%%%%%%%%%%%%%%%%%%%%%%%%%%%%%%%%%%%%%%%%%%%%%%%%%%%%%%%%%%%%%%%%%%

\end{document}